\newtheorem{thm}{Theorem}[section]
\newtheorem{prop}{Proposition}[section]
\newtheorem{lemma}{Lemma}[section]
\newtheorem{cor}{Corollary}[section]
\theoremstyle{definition}
\newtheorem{definition}{Definition}[section]
\theoremstyle{remark}
\newtheorem{remark}{Remark}[section]
\numberwithin{equation}{section}
\DeclareMathOperator{\End}{End} 
\DeclareMathOperator{\Tr}{Tr} 
\DeclareMathOperator{\id}{id} 
\DeclareMathOperator{\spa}{span} 
\newcommand{\authorfootnotes}{\renewcommand\thefootnote{\@fnsymbol\c@footnote}}%
\begin{document}

\title{Some oscillatory representations of fuzzy conformal group $SU(2,2)$ with positive energy}
\maketitle

\begin{center}
 \authorfootnotes
  Samuel Bezn\'ak\footnote{\href{mailto:samuel.beznak@fmph.uniba.sk}{samuel.beznak@fmph.uniba.sk}}\textsuperscript{1}, Peter Pre\v snajder\footnote{\href{mailto:presnajder@fmph.uniba.sk}{presnajder@fmph.uniba.sk}}\textsuperscript{1}
   \par \bigskip

  \tiny \textsuperscript{1} Department of Theoretical Physics, Faculty of Mathematics, Physics and Informatics, Comenius University Bratislava \\ Mlynsk\'a dolina F1, Bratislava 842 48, Slovakia
  \par \bigskip

\end{center}

\normalsize

\begin{abstract}
We construct the relativistic fuzzy space as a non-commutative algebra of functions with purely structural and abstract coordinates being the creaction and annihilation (C/A) operators acting on a Hilbert space $\mathcal{H}_F$. Using these oscillators, we represent the conformal algebra $su(2,2)$ (containing the operators describing physical observables, that generate boosts, rotations, spatial and conformal translations, and dilatation) by operators acting on such functions and reconstruct an auxiliary Hilbert space $\mathcal{H}_A$ to describe this action. We then analyze states on such space and prove them to be boost-invariant. Eventually, we construct two classes of irreducible representations of $su(2,2)$ algebra with \textit{half-integer} dimension $d$ (\cite{Ma}): (i) the classical fuzzy massless fields as a doubleton representation of the $su(2,2)$ constructed from one set of C/A operators in fundamental or unitary inequivalent dual representation and (ii) classical fuzzy massive fields as a direct product of two doubleton representations constructed from two sets of C/A operators that are in the fundamental and  dual representation of the algebra respectively. 
\end{abstract}

\bigskip

\tableofcontents

\bigskip

\newpage

\section{Motivation and Introduction}
\setcounter{footnote}{0}
The group of the conformal transformations in four dimensions, $SO(2,4)$ (or its double cover $Spin(2,4) \cong SU(2,2)$), appears in many physical contexts. Most prominently it is an isometry group of the $AdS_5$ space which occurs as a (part of the) background for number of theories, e. g. the IIB superstring theory on $AdS_5 \times S^5$ background (\cite{MeTs}). Its compactification on $S^5$ produces KK modes governed by the unitary irreps of the $\mathcal{N}=4$ superconformal group $SU(2,2\vert 4)$, whose even part $SU(2,2) \otimes SU(4) \cong SO(2,4) \otimes SO(6)$ is the isometry of the $AdS_5 \times S^5$. Moreover, the CFT dual of this theory on its boundary is the $\mathcal{N}=4$ supersymmetric Yang -- Mills theory, which enhances the motivation for analysis of such theory. 

Within current efforts to the unifying theory of quantum gravity, the connection of the gravitation and quantum field theory requires implementation of non-trivial structure, i. e. fundamental length at the Planck scale, which was already proposed quite long time ago (\cite{Sn}, renewed interest \cite{DoFr}). Such structure implies, that we cannot distinguish space-time events with minor separation. One way of dealing with such obstacle is to construct the theory on a non-commutative (NC) space (\cite{Co1}, \cite{Sz}), that disposes of such structure without spoiling the initial symmetry of the space. 

There are several methods of how to realize NC spaces. Among the most common are matrix models (\cite{Mad1}) with the connection to the construction of the M-theory (\cite{Co2}), or in the string theory with quantum field theory on NC space as an low energy limit of an open string theory (\cite{SeWi}). They appear naturally, when the NC space is given as a non-commutative asociative algebra of functions on the (formerly commutative) space. Field theories on such fuzzy spaces can provide valuable information about the space itself (e. g. \cite{ŠuTe}, \cite{Te}, \cite{AsFi}).  However, within the ambition of quantization of a theory on such NC space, another method seems to be more convenient: the oscillator construction, already used for analyses in NC quantum mechanics (\cite{GaKoPP}, \cite{KoPP}, idea inspired by \cite{Sch}). 

The fundamental role of the pair of creation and annihilation operators (C/A)  in quantum field theories is evident: due to  Dirac, Wigner or Weinberg (\cite{We1}) the C/A's acting on a Hilbert space of states are a natural construction. Moreover, quite recently (\cite{GuVo}) a concise and complete classification of unitary irreducible representations of non-compact Lie superalgebras was presented, where the whole construction lies on bosonic and fermionic C/A's. 

Within the above context, our main goal   of constructing the NC relativistic space using oscillators is highly motivated. \\

Four-dimensional NC relativistic space (or relativistic fuzzy space) with conformal symmetry will be of our interest in this paper. There have not been many explicit realizations of this issue (\cite{Eck}):  the idea of Connes's almost non-commutative geometry (\cite{Con}), $\kappa$-deformed Minkowski space (\cite{KoLuMa}, \cite{CeWe}) or direct efforts of adding gravitational fields to NC spaces (\cite{Mad2}). All are classical and lack direct ability to be quantized. 

Moreover, the analysis of the unitary irreducible representations of the underlying group (or algebra) $SU(2,2)$ itself is quite a novel feature, too. The first original treatment of Mack (\cite{Ma}) proposed 5 classes of such representations, but the construction is far from explicit. Then \cite{GuVo} made this explicit and veryfied Mack's result using the oscillator construction. For the NC case, however, no such construction has been made so far. One would think that \cite{GuVo} is directly generalizable to the NC case, since it contains oscillators as well, but contrary is true: their construction uses oscillators, whose action is realised on a \textit{commutative} vector space $\mathcal{F}_\gamma$ viewed as a ring of functions (that posseses a real auxiliary parameter $\gamma$ giving rise to representations with non-integer highest weight). In our case, we will realize the non-commutativity of the space itself using oscillators forming a non-commutative algebra of functions. Thus our oscillators are purely "NC-structural" and have no connection to those in \cite{GuVo} (and neiter to the creation or annihilation of any particle as in \cite{We1}). 

In our paper, we want to reinterpret Mack's construction. Due to the differences mentioned above, our construction will be applicable only to two out of four non-trivial classes of representation classes, described only by half-integer $d$ (dimension), whereas the other two would need $d$ to be a general real number (cf. Section \ref{sec_discussion} for a discussion of this issue). This could be viewed as a consequence of the fuzziness of the space per analogy with the spectrum of quantum harmonic oscillator, that is discrete due to the non-commutativity of the C/A's. 

We proceed as follows: in the first part of the paper, we construct particular oscillatory representation of $so(2,4) \cong su(2,2)$ algebra. To do this, we construct non-commutative algebra of functions of set of C/A's, that act on a (fuzzy) Hilbert space. We then represent the $so(2,4)$ algebra as an action on such functions. To this end, we construct another (auxiliary) Hilbert space as a more practical replacement for this action. Choosing a convenient basis of states on this space (i. e. the coherent states, \cite{Pe}), we prove these states to be boost-invariant and from their properties we reveal that we actually constructed the doubleton irreducible representation, i. e. classical fuzzy massless field. Consequently the second part of the paper is dedicated to direct product of such representations yielding another class of representations, massive classical fuzzy fields. At the and, discussion with outlook can be found, where we outline further possible analysis of this kind with connection to a possible interplay between NC space and gravity within the context of Newman -- Penrose formalism (\cite{NP}).

\bigskip

\section{Oscillatory Doubleton Representation}
\subsection{Fundamental and dual representations of $su(2,2)$}

Let us briefly review the constuction of the generators of $su(2,2)$ from isomorphism of Lie algebras $su(2,2) \cong so(2,4)$. We use metrics $\eta_{ab} = (+----+)$ on the latter, $a,b = 0,1,\dots, 5$. Restriction onto subalgebra $so(1,3)$ (signature $(+---)$) comprises of indices $\mu, \nu = 0,1,2,3$.  

Manifestly covariant $so(2,4)$ algebra is given in terms of generators $S_{ab}$:
\begin{equation}
[S_{ab}, S_{cd}] = \eta_{ac} S_{bd} - \eta_{ad} S_{bc} - \eta_{bc} S_{ad} + \eta_{bd} S_{ac}.
\label{so24}
\end{equation} 
The Lorentz $so(1,3)$ (sub)algebra is generated by 10 elements $S_{\mu \nu}$ in a standard way. Moreover, $S_{\mu 4}, S_{\mu 5}$ are $so(1,3)$ four-vectors, hence a linear combination of linearly independent $\gamma_\mu, \gamma_5 \gamma_\mu$. This motivates the choice of the generators as follows \footnote{We use mathematical convention, i. e. commutators wihout $i$ on the r.h.s.: $[A_{mat},B_{mat}]=C_{mat}$. For the physical convention, use $A_{ph}= i A_{mat}$, which yields $[A_{ph},B_{ph}]=i\, C_{ph}$. \label{convention}}
\begin{subequations}
\begin{align}
S_{\mu \nu} = &\, - \frac{1}{4} [\gamma_\mu, \gamma_\nu], \\
S_{\mu 4} = &\, +\frac{i}{2} \, \gamma_5 \gamma_\mu, \\
S_{\mu 5} = &\, - \frac{i}{2}\, \gamma_\mu, \\
S_{45} =& \, +\frac{1}{2} \, \gamma_5.
\end{align}\label{generators}
\end{subequations}

A more convenient basis is choosen for the subalgebra $su(1,1)$ ($a,b = 4, 5$):
\begin{subequations}
\begin{align}
P_\mu = &\,c \left( S_{\mu 5} + S_{\mu 4} \right), \\
K_\mu = &\, c \left( S_{\mu 5} - S_{\mu 4} \right), \\
D = &\, S_{45},
\end{align}\label{gen_PKD}
\end{subequations}
i.e. momenta ($P_\mu$), special conformal transformations ($K_\mu$) and dilatation ($D$); $c \in \mathbb{C}$ is normalization constant, we choose $c=1$ henceforth. Then new commutator relations appear (cf. \cite{GuVo} for the physical convention): 
\begin{subequations}
\begin{align}
[S_{\mu \nu},D ] = &\, 0\ \\
[P_\mu,D] = &\, P_\mu, \\
[K_\mu,D] = &\, - K_\mu, \\
[P_\mu, P_\nu] = &\, 0 = [K_\mu, K_\nu], \\
[K_\mu, P_\nu] = &\, 2\, \left( S_{\mu \nu} - \eta_{\mu \nu} D \right).
\end{align}\label{CR_PKD}
\end{subequations}

For what follows, we need generators in fundamental (and dual) representation. We work in Dirac basis, where $\gamma$-matrices read
\begin{align}
\gamma^0 = \begin{pmatrix} 1 & 0 \\ 0 & -1 \end{pmatrix}, \, \,  \, \, \, 
\gamma^i = \begin{pmatrix} 0 & \sigma^i \\ - \sigma^i & 0 \end{pmatrix}, \, \, \, \, \, 
\gamma^5 = i \gamma^0 \, \gamma^1 \, \gamma^2 \, \gamma^3 = \begin{pmatrix} 0 & 1 \\ 1 & 0\end{pmatrix}, 
\end{align}
$\sigma^i$ being the Pauli matrices. Note that  $(\gamma_0, \gamma_i, \gamma_5) = (\gamma^0, - \gamma^i, - \gamma^5)$. 
Working out all generators explicitly yields their matrix form in the fundamental representation
\begin{subequations}
\begin{align}
S_{0i} =&\,  - \frac{1}{4}\,[\gamma_0, \gamma_i] = \frac{1}{4}\,[\gamma^0, \gamma^i]= \frac{1}{2}\,\begin{pmatrix} 0 & \sigma_i \\ \sigma_i & 0 \end{pmatrix}, \\
S_{ij}=&\, - \frac{1}{4}\, [\gamma_i , \gamma_j] = - \frac{1}{4}\, [\gamma^i , \gamma^j] = \frac{i}{2}\, \varepsilon_{ijk} \begin{pmatrix}
\sigma_k & 0 \\ 0 & \sigma_k \end{pmatrix},\\
K_0 = &\,\frac{i}{2}\, \begin{pmatrix} -1 & -1 \\ 1 & 1 \end{pmatrix}, 
\\
K_i = &\, \frac{i}{2} \, \begin{pmatrix}
\sigma_i & \sigma_i \\ - \sigma_i & - \sigma_i
\end{pmatrix}, \\
P_0 = &\,\frac{i}{2}\, \begin{pmatrix} -1 & 1 \\ -1 & 1 \end{pmatrix},
\\ 
P_i = &\, \frac{i}{2} \, \begin{pmatrix}
- \sigma_i & \sigma_i \\ - \sigma_i &  \sigma_i
\end{pmatrix}, \\
D = &\, - \frac{1}{2}\, \begin{pmatrix} 0 & 1 \\ 1 & 0 \end{pmatrix}.
\end{align}\label{gen_expl} 
\end{subequations}

\begin{remark}
For the matrix Lie algebra $\mathcal{G}$,  representation $\rho \colon \mathcal{G} \to gl(n, \mathbb{R})$ has its dual $\rho^\ast \colon \mathcal{G} \to gl(n, \mathbb{R})$ defined for each $X\in \mathcal{G}$ as $\rho^\ast(X) = - \rho(X)^T$. In our case, dual representation is unitary inequivalent to the fundamental one, because there is no such unitary matrix $\mathcal{U}$ that would ensure $\mathcal{U} \, \rho(X) \, \mathcal{U}^{-1} = \rho^\ast(X)$. We denote the dual generators by prime, $S'_{ab} = - S_{ab}^T$, or explicitely
\begin{subequations}
\begin{align}
S_{0i}^\prime =&\, (-1)^i \, S_{0i}, \\
S_{ij}^\prime=&\, \frac{i}{2}\, (-1)^k\, \varepsilon_{ijk} \begin{pmatrix}
\sigma_k & 0 \\ 0 & \sigma_k \end{pmatrix},\\
K_0^\prime = &\,- P_0,\\
K_i^\prime = &\, - (-1)^i\, P_i, \\
P_0^\prime = &\, -K_0, \\ 
P_i^\prime = &\, -(-1)^i\, K_i, \\
D^\prime = &\, - D
\end{align}\label{gen_expl_dual} 
\end{subequations}
as $(-\sigma_i)^T = (-1)^i \sigma_i$. 
\end{remark}

\bigskip

\subsection{Oscillatory representation(s)}
We are in position to construct the oscillatory representation. Before doing so, few definitions and conventions are needed. 

\begin{definition}[Fuzzy Hilbert space]
Let $a_\alpha$, $a^{\dagger \alpha}$, $\alpha=1,2$ be an auxiliary doublet of oscillators, i.e. creation and annihilation operators (C/A) with standard commutator relations (note the position of the indices)
\begin{align}
[a_\alpha, a^{\dagger \beta}]= \delta_\alpha^{\phantom \alpha \beta}, \, \, \, \, [a_\alpha, a_\beta] = 0 = [a^{\dagger \alpha}, a^{\dagger \beta}].
\end{align} 
Then (fuzzy)\footnote{We want to distinguish it from the ordinary Hilbert space of states, where $a^\dagger$ corresponds to creation of a particle state. Our $\mathcal{H}_F$ is purely an algebraic construction.} Hilbert space $\mathcal{H}_F$ of normalised states is constructed from vacuum state $|0\rangle \equiv |0,0 \rangle$, $\hat{a}_1 | 0 \rangle = 0 = \hat{a}_2 | 0 \rangle$, in a standard way
\begin{align}
\mathcal{H}_F := \{| n_1, n_2 \rangle \,, \, \,  n_1, n_2 \in \mathbb{N} \, \big\vert \, \, 
| n_1, n_2 \rangle = \frac{\left(a^{\dagger 1} \right)^{n_1} \left(a^{\dagger 2} \right)^{n_2} }{\sqrt{n_1! \, n_2!}} \, | 0, 0 \rangle\}. \label{fock_basis}
\end{align}
\end{definition}

\begin{remark}
\label{rm_fock_norm}
The norm on $\mathcal{H}_F$, $|| \, \vert n \rangle \, ||^2 := \langle n \vert n \rangle$, agrees with the normalization of the states (for fixed $\alpha$): 
\begin{equation}
a_\alpha \, \vert n_\alpha \rangle = \sqrt{n_\alpha}\, \vert n_\alpha-1 \rangle, \, \, \, \, \, a^{\dagger \alpha} \, \vert n_\alpha \rangle = \sqrt{n_\alpha+1}\, \vert n_\alpha + 1 \rangle. \label{norm_FF}
\end{equation}
\end{remark}

This construction enables us to define one of the crutial objects of our investigation.

\begin{definition}[Fuzzy function]
Let $\mathcal{A}:= \End(\mathcal{H}_F)$. Then $\Psi(a^\dagger,a) \in \mathcal{A}$ is a fuzzy function. It is an analytic function of C/A's written in normal order\footnote{We use the standard notation: if $\chi$ is an arbitrary product of C/A's, then $\colon \chi \colon$ denotes its normal order.\label{normal} }, i.e. with all creation operators to the left of all annihilation operators to the right, 
\begin{align}
\Psi(a^\dagger, a) = \sum_{n_1, n_2, m_1, m_2} c_{n_1 n_2 m_1 m_2} (a^{\dagger 1})^{n_1} (a^{\dagger 2})^{n_2} (a_1)^{m_1} (a_2)^{m_2}. \label{expansion}
\end{align}
We use Hilbert --  Schimdt norm $|| \Psi ||^2_{\mathcal{A}} = \langle \Psi, \Psi \rangle = \Tr \left( \Psi^\dagger \Psi \right)$ on $\mathcal{A}$.  
\end{definition}

\begin{remark}[Chiral parameter]
Fuzzy function acquires a phase factor under chiral tranformation $a_\alpha \mapsto e^{i\, \theta} a_\alpha$:
\begin{align}
\Psi \left(e^{- i \alpha} a^\dagger, e^{i \alpha} a \right)= e^{- i \kappa \alpha} \, \Psi(a^\dagger, a). \label{chirality}
\end{align}
Here $\kappa = (n_1 + n_2) - (m_1 + m_2) \in \mathbb{Z}$ governs the difference between the number of $a^\dagger$ and $a$ in the power expansion of $\Psi$ and we will denote it the chiral parameter. 
\end{remark}

Our represented $su(2,2)$ generators will act on such functions. To construct them, we need to define operators on $\mathcal{A}$\footnote{We go with the convention, that operators on $\mathcal{A}$ are hatted.}. 

\begin{definition}[Auxiliary oscillators]
Let $\Psi \in \mathcal{A}$. Then auxiliary set $\hat{a}_\alpha, \hat{b}^{\dot{\alpha}} \in \End(\mathcal{A})$ with their conjugates acts on fuzzy function as follows:\begin{subequations}
\begin{align}
\hat{a}_\alpha  \Psi :=&\, a_\alpha  \Psi, \\
\hat{a}^{\dagger \alpha}  \Psi :=&\, a^{\dagger \alpha}  \Psi, \\
\hat{b}^{\dot{\alpha}}  \Psi := &\, \Psi  \varsigma^{\dot{\alpha} \alpha} a_\alpha , \\
\hat{b}_{\dot{\alpha}}^\dagger \Psi := &\, \Psi a^{\dagger \alpha} \varsigma_{\alpha \dot{\alpha}},
\end{align}
\end{subequations}
where $\varsigma_{1 \dot{1}} = \varsigma_{2 \dot{2}} = \varsigma^{\dot{1} 1 } = \varsigma^{\dot{2} 2} = 1$, $\varsigma_{1 \dot{2}} = \varsigma_{2 \dot{1}} = \varsigma^{\dot{1} 2 } = \varsigma^{\dot{2} 1} = 0$ and with the summation convention $\phantom 1^\alpha \searrow \phantom 1_\alpha, \phantom 1_{\dot{\alpha}} \nearrow \phantom 1^{\dot{\alpha}}$, cf. \cite{DrHa}.  
\end{definition}

\begin{remark}
From the construction obviously $[\hat{a}_\alpha, \hat{a}^{\dagger \beta}] =\, \delta_\alpha^{\phantom \alpha \beta}, \, \, \, \, [\hat{b}^{\dot{\alpha}}, \hat{b}_{\dot{\beta}}^\dagger] =\, - \delta^{\dot{\alpha}}_{\phantom \alpha \dot{\beta}}$, and all other commutators are zero. 
\end{remark}

\begin{remark}
We do not associate $\varsigma$'s (with bisponorial index structure) neither with any four-vector\footnote{Cf. (2.32) in \cite{DrHa}, $V_{a \dot{\beta}} \leftrightarrow V^\mu = \frac{1}{2} \bar{\sigma}^{\mu \dot{\beta} \alpha} V_{\alpha \dot{\beta}}$.}, nor with symplectic form on $sl(2,\mathbb{C})$ ( i.e. Levi-Civita symbol $\varepsilon$ used for lowering / raising indices in proper spinorial context). It is only Kronecker $\delta$ symbol with mixed dotted/undotted indices refering to the off-diagonal block position, i.e. $\varsigma_{\alpha \dot{\alpha}} \varsigma^{\dot{\alpha} \beta} = \delta_\alpha^{\phantom \alpha \beta}, \, \, \,  \varsigma^{\dot{\alpha} \alpha} \varsigma_{\alpha \dot{\beta}} = \delta^{\dot{\alpha}}_{\phantom \alpha \dot{\beta}}$. 
\end{remark}

\begin{definition}[Auxiliary (modified) Hilbert space]
Let $\hat{a}_\alpha, \, \hat{a}^{\dagger \alpha}, \, \hat{b}^{\dot{\alpha}} ,\,  \hat{b}^\dagger_{\dot{\alpha}} \in \End(\mathcal{A})$ be as above. Then a general element $h = \vert n_1, n_2 \rangle \langle m_1, m_2 \vert \equiv \vert n_1, n_2 \rangle \otimes \id_{\mathcal{A}} \otimes \langle m_1, m_2 \vert$ of the auxiliary Hilbert space $\mathcal{H}_A \simeq \mathcal{H}_F \otimes \mathcal{A} \otimes \overline{\mathcal{H}}_F$ is defined by its action on function $\Psi \in \mathcal{A}$ (viewed as $\mathcal{A} \hookrightarrow \mathcal{H}_A$, i.e. $\Psi \equiv \id_{\mathcal{H}_F} \otimes \Psi \otimes \id_{\overline{\mathcal{H}}_F}$) as follows: 
\begin{equation}
h \circ \Psi \equiv \vert n_1, n_2 \rangle \otimes \Psi \otimes \langle m_1, m_2 \vert := (a^{\dagger 1 })^{n_1}\, (a^{\dagger 2})^{n_2} \, \Psi \, (a_1)^{m_1} \, (a_2)^{m_2}. 
\end{equation}
\end{definition}

\begin{remark}[Notation]
One should distinguish the basis elements on $\mathcal{H}_F$ from the alike looking (part of) basis elements on $\mathcal{H}_A$, as they have different normalisation and different action of C/A's on them. For the sake of notational simplicity, however, we will abuse the notation and use the ordinary bra-ket notion in both cases. It should be clear from the context, whether we deal with states on $\mathcal{H}_F$ or $\mathcal{H}_A$. 
\end{remark}

\begin{remark}
\label{monomial}
The fuzzy monomials $(a^{\dagger 1})^{n_1} \, (a^{\dagger 2})^{n_2} \,  (a_1)^{m_1} \, (a_2)^{m_2}$ can be viewed as an action of $\vert n_1, n_2 \rangle \langle m_1, m_2 \vert \equiv \vert n_1, n_2 \rangle \otimes \id_{\mathcal{A}} \otimes \langle m_1, m_2 \vert$ on $1_{\mathcal{H}_A} \equiv \id_{\mathcal{H}_F} \otimes 1_\mathcal{A} \otimes \id_{\overline{\mathcal{H}}_F}$ . Thus action of any $f \in \mathcal{H}_A$ on any general $\Psi \in \mathcal{A} \hookrightarrow \mathcal{H}_A$ can be reduced to the action on $1_{\mathcal{H}_A}$. 
\end{remark}

Naively, one would construct Hilbert spaces similar to $\mathcal{H}_F$ for both C/A sets ($a$'s and $b$'s), but it would not work. Let us take $\psi = (a^{\dagger 1})^{n_1} \, (a^{\dagger 2})^{n_2} \,  (a_1)^{m_1} \, (a_2)^{m_2}$. Acting with $\hat{a}^{\dagger \alpha}$ or $\hat{b}^{\dot{\alpha}}$ meets the normal ordering, which is, however, not true for the remaining two cases. The modification is summarized in the following

\begin{prop}
\label{prop_fockA}
Action of $\hat{a}_\alpha, \, \hat{a}^{\dagger \alpha}, \, \hat{b}^{\dot{\alpha}} ,\,  \hat{b}^\dagger_{\dot{\alpha}} \in \End(\mathcal{A})$ on the auxiliary Hilbert space $\mathcal{H}_A = \spa \{ \vert n_1, n_2 \rangle \langle m_1, m_2 \vert \}$ reads (for fixed $\alpha$):
\begin{subequations}
\begin{align}
\hat{a}^{\dagger \alpha} \, \vert  n_\alpha \rangle \langle  m_\alpha \vert  =&\, \vert  n_\alpha+1  \rangle \langle  m_\alpha \vert, \\
\hat{a}_\alpha \, \vert  n_\alpha \rangle \langle m_\alpha \vert =& \, \vert  n_\alpha \rangle \langle  m_\alpha+1  \vert + n_\alpha \, \vert  n_\alpha-1  \rangle \langle  m_\alpha \vert,\\
\hat{b}^{\dot{\alpha}} \, \vert  n_\alpha \rangle \langle m_\alpha \vert =& \, \vert  n_\alpha \rangle \langle m_\alpha+1 \vert, \\
\hat{b}^\dagger_{\dot{\alpha}}\,  \vert  n_\alpha \rangle \langle m_\alpha \vert =&  \vert  n_\alpha+1 \rangle \langle m_\alpha \vert + m_\alpha \vert  n_\alpha \rangle \langle m_\alpha -1\vert.
\end{align}
\end{subequations}
\end{prop}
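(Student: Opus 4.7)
The plan is to identify each basis vector $|n_\alpha\rangle\langle m_\alpha|\in\mathcal{H}_A$ with its concrete realization as a fuzzy monomial in $\mathcal{A}$ and then simply apply the defining left/right-multiplication rules for the four operators. By Remark~\ref{monomial} it is enough to record what happens when each operator acts on $1_{\mathcal{H}_A}$, so for fixed $\alpha$ the basis vector becomes the normal-ordered monomial $(a^{\dagger\alpha})^{n_\alpha}(a_\alpha)^{m_\alpha}$ and the statement reduces to showing that, after applying the hat operator, the resulting element of $\mathcal{A}$ is a sum of normal-ordered monomials with the advertised coefficients. The sector $\beta\neq\alpha$ decouples automatically because all operators in that other sector commute with $a_\alpha,a^{\dagger\alpha}$ and can be carried along unchanged.

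The four cases split into two trivial ones and two that require a single commutator. The operators $\hat{a}^{\dagger\alpha}$ and $\hat{b}^{\dot\alpha}$ act, respectively, as left multiplication by $a^{\dagger\alpha}$ and right multiplication by $\varsigma^{\dot\alpha\beta}a_\beta=a_\alpha$ (the $\varsigma$-contraction collapses to a single Kronecker term); both preserve normal order and just increment $n_\alpha$ or $m_\alpha$, giving the first and third formulae immediately. For $\hat{a}_\alpha$, left multiplication by $a_\alpha$ breaks normal order, and I would restore it using
\begin{equation*}
[a_\alpha,(a^{\dagger\alpha})^{n_\alpha}]=n_\alpha(a^{\dagger\alpha})^{n_\alpha-1},
\end{equation*}
proved by an obvious induction on $n_\alpha$ from the canonical relation $[a_\alpha,a^{\dagger\alpha}]=1$. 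This yields $a_\alpha(a^{\dagger\alpha})^{n_\alpha}(a_\alpha)^{m_\alpha}=(a^{\dagger\alpha})^{n_\alpha}(a_\alpha)^{m_\alpha+1}+n_\alpha(a^{\dagger\alpha})^{n_\alpha-1}(a_\alpha)^{m_\alpha}$, which is precisely the claimed two-term expression. The case $\hat{b}^\dagger_{\dot\alpha}$ is symmetric: right multiplication by $a^{\dagger\alpha}$ is reordered using the dual identity $[(a_\alpha)^{m_\alpha},a^{\dagger\alpha}]=m_\alpha(a_\alpha)^{m_\alpha-1}$, producing the fourth formula.

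There is no real obstacle here; the content of the proposition is simply that the modification of the naive Fock-like rules (\ref{norm_FF}) demanded by the non-normal-ordered cases takes exactly the stated form. The only points requiring care are bookkeeping points: (i) verifying that the $\varsigma$-contractions in the definitions of $\hat{b}^{\dot\alpha},\hat{b}^\dagger_{\dot\alpha}$ really do collapse, in the fixed-$\alpha$ statement, to right multiplication by the single operator $a_\alpha$ or $a^{\dagger\alpha}$ with index matched to the dotted label; and (ii) keeping the two sectors $\alpha=1,2$ disentangled so that the formulae can legitimately be written in the one-index form above. Once these are dispatched, the proof is just the four one-line calculations sketched above.
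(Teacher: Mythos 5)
Your proposal is correct and follows essentially the same route as the paper: cases (a) and (c) are immediate from the definitions, while (b) and (d) are handled by restoring normal order via the commutators $[a_\alpha,(a^{\dagger\alpha})^{n}]=n\,(a^{\dagger\alpha})^{n-1}$ and $[(a_\alpha)^{m},a^{\dagger\alpha}]=m\,(a_\alpha)^{m-1}$, exactly as in the paper's use of (\ref{a_der}). The only cosmetic difference is that the paper packages these commutators into operator identities on $\mathcal{A}$ (writing $\hat{a}_\alpha$ and $\hat{b}^\dagger_{\dot\alpha}$ as $\varsigma$-contracted combinations of the well-ordered operators plus derivative terms), whereas you perform the same reordering directly on the monomials.
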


\begin{proof}
Cases a), c) are trivial and could be understood as a \textit{normalization condition}. As for b), d), one has to restore the normal ordering. An elementary calculation using
\begin{equation}
[a_\alpha, (a^{\dagger \alpha})^n] = n\, (a^{\dagger \alpha})^{n-1} \implies [a_\alpha, \cdot\, ] = \frac{\partial}{\partial a^{\dagger \alpha}}\, \, \, \, \, \mbox{and}\, \, \, \, \, [a^{\dagger \alpha}, (a_\alpha)^n] = -n\, (a_\alpha)^{n-1} \implies [a^{\dagger \alpha}, \cdot \,] = - \frac{\partial}{\partial a_\alpha}
\label{a_der}
\end{equation}  
shows, that 
\begin{equation}
\hat{a}_\alpha = \varsigma_{\alpha \dot{\alpha}} \, \hat{b}^{\dot{\alpha}} + \frac{\partial}{\partial a^{\dagger \alpha}} \, \, \, \, \, \mbox{and} \, \, \, \, \,  \hat{b}^\dagger_{\dot{\alpha}} = \hat{a}^{\dagger \alpha} \, \varsigma_{\alpha \dot{\alpha}} + \frac{\partial}{\partial a_\alpha} \, \varsigma_{\alpha \dot{\alpha}} \, \, \, \, \,  \mbox{on} \, \, \,  \mathcal{A}.
\end{equation}
Thus action of $\hat{a}_\alpha, \, \hat{b}^\dagger_{\dot{\alpha}}$ can be recasted into action of $\hat{a}^{\dagger \alpha}, \, \hat{b}^{\dot{\alpha}}$, that respects the normal ordering and is natural on $\mathcal{H}_A$. 
\end{proof}

\begin{remark}
This result can be also checked directly on $\vert \mu_1, \mu_2 \rangle \in \mathcal{H}_F$. Take fuzzy monomial $\psi = (a^{\dagger 1})^{n_1} \, (a^{\dagger 2})^{n_2} \,  (a_1)^{m_1} \, (a_2)^{m_2}$, then $\psi \, \vert \mu_1, \mu_2 \rangle = \prod_{\alpha} \sqrt{\frac{(\mu_\alpha-m_\alpha+n_\alpha)! \, \mu_\alpha!}{(\mu_\alpha-m_\alpha)!^2}} \, \vert \mu_1-m_1+n_1, \mu_2-m_2+n_2 \rangle$ (cf. Remark \ref{rm_fock_norm} for the normalization). In the same fashion one can easily show, that e.g. $a_1\,  \psi \, \vert \mu_1, \mu_2 \rangle = \left( (a^{\dagger 1})^{n_1} \, (a^{\dagger 2})^{n_2} \,  (a_1)^{m_1+1} \, (a_2)^{m_2} 
+n_1\, (a^{\dagger 1})^{n_1-1} \, (a^{\dagger 2})^{n_2} \,  (a_1)^{m_1} \, (a_2)^{m_2} \right) \vert \mu_1, \mu_2 \rangle$, the rest analogously. 
\end{remark}

\begin{remark}
Considering the HS norm, one has to be cautious about a minor modification due to another normalization of states on $\mathcal{H}_A$: 
\begin{equation}
||\Psi ||^2_{\mathcal{H}_A} = \Tr ( \Psi^\dagger\, \Psi) = \sum_k \frac{1}{k_1!\, k_2!}\, \langle k_1, k_2 \vert \Psi^\dagger\, \Psi \vert k_1, k_2 \rangle.
\end{equation}
\end{remark}

Now is everything ready for the construction of the oscillatory representation of $su(2,2)$ algebra. 

\begin{prop}[Oscillatory representation]
\label{prop_oscil}
Let $\hat{A}_{\underline{\alpha}} := \left( \hat{a}_\alpha \, \, \hat{b}^{\dot{\alpha}} \right)^T$ and $\tilde{A}_{\underline{\alpha}} :=  \left(\tilde{a}_\alpha \, \, \tilde{b}^{\dot{\alpha}} \right)^T$, where $\underline{\alpha}$ is multi-index for both dotted and undotted indices copyring the position of the undotted one, moreover let $\Gamma_{\underline{\alpha}}^{\phantom \alpha \underline{\beta}} := 
[\hat{A}_{\underline{\alpha}}, \hat{A}^{\dagger \underline{\beta}} ] \equiv [\tilde{A}_{\underline{\alpha}}, \tilde{A}^{\dagger \underline{\beta}} ]$ (cf. Remark \ref{notation_hattilde} for the notation).
Then
\begin{subequations}
\begin{align}
\hat{S}_{ab} = &\, \hat{A}^\dagger \, \Gamma \, S_{ab} \, \hat{A}, \\
\tilde{S}_{ab} = &\, \tilde{A}^\dagger \, \Gamma \, S_{ab}^\prime \, \tilde{A} 
\end{align}\label{osc_rep}
\end{subequations}
is the oscillatory unitary fundamental and dual representation respectively. 
\end{prop}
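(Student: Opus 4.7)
The plan is to recognize \eqref{osc_rep} as a Jordan--Schwinger construction and reduce everything to a single universal identity. The canonical commutation relations assemble into the matrix statement $[\hat{A}_{\underline\alpha}, \hat{A}^{\dagger \underline\beta}] = \Gamma_{\underline\alpha}^{\;\underline\beta}$, with $\Gamma = \mathrm{diag}(+1,+1,-1,-1)$ the indefinite form of signature $(2,2)$; crucially $\Gamma^2 = \mathbf{1}_4$. I will first establish the abstract identity
\begin{equation}
\bigl[\,\hat A^\dagger \Gamma M \hat A\,,\,\hat A^\dagger \Gamma N \hat A\,\bigr] \;=\; \hat A^\dagger \Gamma [M,N]\,\hat A \qquad (M,N\in gl(4,\mathbb{C}))
\end{equation}
(valid verbatim for $\tilde A$ as well), and then exploit it throughout.

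The identity is proved by expanding $[T_P, T_Q]$ with $T_X:=\hat A^\dagger X \hat A$: the $\hat A$'s mutually commute and so do the $\hat A^\dagger$'s, so the doubly-normal-ordered piece cancels, leaving $[T_P,T_Q] = \hat A^\dagger(P\Gamma Q - Q\Gamma P)\hat A$ for any $P,Q$. Substituting $P=\Gamma M$, $Q=\Gamma N$ and using $\Gamma^2=\mathbf{1}_4$ collapses the right-hand side to $\hat A^\dagger \Gamma[M,N]\hat A$. Plugging $M=S_{ab}$, $N=S_{cd}$ and invoking \eqref{so24} yields $[\hat S_{ab},\hat S_{cd}]=\eta_{ac}\hat S_{bd}-\eta_{ad}\hat S_{bc}-\eta_{bc}\hat S_{ad}+\eta_{bd}\hat S_{ac}$ immediately. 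For the dual case, the map $\phi(M):=-M^T$ is a Lie-algebra automorphism ($[\phi(M),\phi(N)] = [M^T,N^T] = -[M,N]^T = \phi([M,N])$ from $(MN)^T = N^T M^T$), so $S'_{ab}=\phi(S_{ab})$ satisfies the same commutation table, and the identity applied to $\tilde A$ with $M\mapsto S'_{ab}$ delivers the dual representation $\tilde S_{ab}$.

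Unitarity amounts to showing each generator is antihermitian on $\mathcal{H}_A$. Under the HS inner product, $\hat A^{\dagger \underline\alpha}$ is genuinely the adjoint of $\hat A_{\underline\alpha}$: the $\hat a$'s are left- and the $\hat b$'s are right-multiplications on $\mathcal A$, both of which dualize to their Hermitian-conjugate multiplications, and the sign from $[\hat b^{\dot\alpha},\hat b^\dagger_{\dot\beta}]=-\delta^{\dot\alpha}_{\;\dot\beta}$ is already absorbed into $\Gamma$. Consequently $(\hat A^\dagger \Gamma M \hat A)^\dagger = \hat A^\dagger M^\dagger \Gamma \hat A$, which equals $-\hat S_{ab}$ exactly when $S_{ab}^\dagger \Gamma + \Gamma S_{ab}=0$, i.e.\ when $S_{ab}\in u(2,2)$. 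This is verified block-by-block from \eqref{gen_expl}, ultimately reducing to $\gamma_\mu^\dagger=\gamma^0\gamma_\mu\gamma^0$; the identical check passes for $S'_{ab}$. The main obstacle is bookkeeping rather than conceptual: the Wick-type expansion must be executed cleanly with the mixed dotted/undotted multi-index, and the passage between $P,Q$ and $\Gamma M,\Gamma N$ uses $\Gamma^2=\mathbf 1$ essentially---the indefinite sign pattern of $\Gamma$ is precisely what promotes the naive $u(4)$ oscillator construction into its non-compact real form $u(2,2)$, so one must resist treating $\hat b$ as an ordinary creation/annihilation pair on a positive-definite sector.
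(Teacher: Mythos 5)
Your proof is correct and follows essentially the same route as the paper: the commutator computation via cancellation of the normal-ordered quartic terms together with $\Gamma^2=\mathbf 1$ (your universal identity for general $M,N$ is just a slightly more abstract packaging of the paper's direct calculation for $S_{ab},S_{cd}$), and antihermiticity of $\hat S_{ab}$ from the $u(2,2)$ condition $S^\dagger\Gamma+\Gamma S=0$. No substantive differences to report.
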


\begin{proof}
We will prove it for the fundamental representation, the dual one is analogous. 

\begin{align}
[\hat{S}_{ab}, \hat{S}_{cd}] = &\, (\hat{A}^\dagger \Gamma)^{\underline{\alpha}} \, (S_{ab})_{\underline{\alpha}}^{\phantom \alpha \underline{\beta}} \hat{A}_{\underline{\beta}} \, (\hat{A}^\dagger \Gamma)^{\underline{\gamma}} \, (S_{cd})_{\underline{\gamma}}^{\phantom \gamma \underline{\delta}} \hat{A}_{\underline{\delta}} - (\hat{A}^\dagger \Gamma)^{\underline{\gamma}} \, (S_{cd})_{\underline{\gamma}}^{\phantom \gamma \underline{\delta}} \hat{A}_{\underline{\delta}}\, (\hat{A}^\dagger \Gamma)^{\underline{\alpha}} \, (S_{ab})_{\underline{\alpha}}^{\phantom \alpha \underline{\beta}} \hat{A}_{\underline{\beta}} \notag \\
= &\, (S_{ab})_{\underline{\alpha}}^{\phantom \alpha \underline{\beta}} \, (S_{cd})_{\underline{\gamma}}^{\phantom \gamma \underline{\delta}} \left[(\hat{A}^\dagger \Gamma)^{\underline{\alpha}} \hat{A}_{\underline{\beta}} \, (\hat{A}^\dagger \Gamma)^{\underline{\gamma}} \hat{A}_{\underline{\delta}} - (\hat{A}^\dagger \Gamma)^{\underline{\gamma}} \hat{A}_{\underline{\delta}} \, (\hat{A}^\dagger \Gamma)^{\underline{\alpha}} \hat{A}_{\underline{\beta}}\right] \notag \\
=&\, (S_{ab})_{\underline{\alpha}}^{\phantom \alpha \underline{\beta}} \, (S_{cd})_{\underline{\gamma}}^{\phantom \gamma \underline{\delta}} \left[(\hat{A}^\dagger \Gamma)^{\underline{\alpha}} \left(\delta_{\underline{\beta}}^{\phantom \beta \underline{\gamma}} - (\hat{A}^\dagger \Gamma)^{\underline{\gamma}} \hat{A}_{\underline{\beta}} \right) \hat{A}_{\underline{\delta}} - (\hat{A}^\dagger \Gamma)^{\underline{\gamma}} \left(\delta_{\underline{\delta}}^{\phantom \beta \underline{\alpha}} - (\hat{A}^\dagger \Gamma)^{\underline{\alpha}} \hat{A}_{\underline{\delta}} \right) \hat{A}_{\underline{\beta}} \right] \notag \\
= &\, (S_{ab})_{\underline{\alpha}}^{\phantom \alpha \underline{\beta}} \, (S_{cd})_{\underline{\gamma}}^{\phantom \gamma \underline{\delta}} \left[(\hat{A}^\dagger \Gamma)^{\underline{\alpha}}\, \delta_{\underline{\beta}}^{\phantom \beta \underline{\gamma}} \hat{A}_{\underline{\delta}} - (\hat{A}^\dagger \Gamma)^{\underline{\gamma}}\, \delta_{\underline{\delta}}^{\phantom \beta \underline{\alpha}}  \hat{A}_{\underline{\beta}} \right] \notag \\
= &\,(\hat{A}^\dagger \Gamma)^{\underline{\alpha}} \left[  (S_{ab})_{\underline{\alpha}}^{\phantom \alpha \underline{\beta}} \, (S_{cd})_{\underline{\beta}}^{\phantom \beta \underline{\gamma}}  - (S_{cd})_{\underline{\alpha}}^{\phantom \alpha \underline{\beta}} \, (S_{ab})_{\underline{\beta}}^{\phantom \beta \underline{\gamma}}  \right] \hat{A}_{\underline{\gamma}} \notag \\
= &\, \hat{A}^\dagger \Gamma [S_{ab}, S_{cd}] \hat{A} = \widehat{[S_{ab}, S_{cd}]}. 
\end{align}

Representation $\rho$ of group $G$ on Hilbert space $\mathcal{H}$ is unitary, if $\mathcal{U} = \rho(g)$ is a unitary operator for any $g\in G$. Operator $\mathcal{U}$ on $\mathcal{H}$ is unitary, when it is surjecive and preserves the scalar product, i.e. $\langle \mathcal{U} \Psi, \mathcal{U} \Phi \rangle = \langle \Psi, \Phi \rangle$ for any $\Psi, \Phi \in \mathcal{H}$ (recall $\langle \Psi, \Phi \rangle_{HS} = \Tr ( \Psi^\dagger \Phi)$, thus we need $\mathcal{U}^\dagger \mathcal{U} = \mathcal{U} \mathcal{U}^\dagger = 1$). In our case, $g = e^S$ for $S$ being a generator of $su(2,2)$. From the matrix $SU(2,2)$ definition ($\Gamma$ is metrics on $SU(2,2)$) we have the following condition on $su(2,2)$ generators
\begin{align}
g^\dagger \Gamma g = \Gamma, \, \, \, \Gamma = \begin{pmatrix}
1 & 0 \\ 0 & - 1\end{pmatrix} \implies S^\dagger \Gamma + \Gamma S = 0.
\end{align}

It can be easily solved for $S$ in terms of $2 \times 2$ blocks (see bellow) as follows:
\begin{align}
(S_\alpha^{\phantom \alpha \beta} )^\dagger = - S_\alpha^{\phantom \alpha \beta}, \, \, \, (S^{\dot{\alpha}}_{\phantom \alpha \dot{\beta}})^\dagger = - S^{\dot{\alpha}}_{\phantom \alpha \dot{\beta}}, \, \, \, (S_{\alpha \dot{\beta}})^\dagger = S^{\dot{\alpha} \beta}, \, \, \, (S^{\dot{\alpha} \beta})^\dagger = S_{\alpha \dot{\beta}}
\end{align} 
i.e. diagonal blocks are skew-hermitian and off-diagonal blocks are mutually hermitian conjugated. Hence
\begin{align}
\hat{S}^\dagger = &\, (\hat{A}^\dagger \, \Gamma \, S \, \hat{A})^\dagger =\hat{A}^\dagger \, S^\dagger \, \Gamma \, \hat{A} = \hat{a}^\dagger S^\dagger \hat{a} + \hat{b}^\dagger S^\dagger \hat{a} - \hat{a}^\dagger S^\dagger \hat{b} + \hat{b}^\dagger S^\dagger \hat{b} \notag \\
=&\, - \left( \hat{a}^\dagger S \hat{a} - \hat{b}^\dagger S \hat{a} + \hat{a}^\dagger S \hat{b} - \hat{b}^\dagger S \hat{b} \right) = - \hat{S}. 
\end{align}
Then if $\rho(S) = \hat{S}$ and $g=e^S$,  $\mathcal{U} = e^{\hat{S}}$ and we have $\mathcal{U}^\dagger = e^{\hat{S}^\dagger} = e^{-\hat{S}}$, hence $\mathcal{U}^\dagger \mathcal{U} = \mathcal{U} \mathcal{U}^\dagger = 1$. 
\end{proof}

\begin{remark}[Block structure]
The block structure of the generators is
\begin{align}
S = \begin{pmatrix}
S_{\alpha}^{\phantom \alpha \beta} & S_{\alpha \dot{\beta}} \\
S^{\dot{\alpha} \beta} & S^{\dot{\alpha}}_{\phantom \alpha \dot{\beta}} 
\end{pmatrix}. \label{blocks}
\end{align}
Then e.g.
\begin{align}
\hat{S} = \hat{A}^\dagger \, \Gamma \, S \, \hat{A} \equiv \hat{A}^{\dagger \underline{\alpha}} \, \Gamma_{\underline{\alpha}}^{\phantom \alpha \underline{\beta}} \, S_{\underline{\beta}}^{\phantom \beta \underline{\gamma}} \, \hat{A}_{\underline{\gamma}} = &\, \hat{a}^{\dagger \alpha} S_\alpha^{\phantom \alpha \beta} \hat{a}_\beta - \hat{b}^\dagger_{\dot{\alpha}} S^{\dot{\alpha} \beta} \hat{a}_\beta + \hat{a}^{\dagger \alpha} S_{\alpha \dot{\beta}} \hat{b}^{\dot{\beta}} - \hat{b}^\dagger_{\dot{\alpha}} S^{\dot{\alpha}}_{\phantom \alpha \dot{\beta}} \hat{b}^{\dot{\beta}} \notag \\
= &\, \hat{a}^\dagger S \hat{a} - \hat{b}^\dagger S \hat{a} + \hat{a}^\dagger S \hat{b} - \hat{b}^\dagger S \hat{b},
\end{align}
where we abused the notation in the second line by dropping the indices, because $2 \times 2$ blocks of $S$ are uniquely determined by $\hat{a}$'s and $\hat{b}$'s, whose indices are on fixed positions. 
\end{remark}

\begin{remark}
One may wish to consider a more general matrix in the construction of the representation, i.e.  $\hat{S}_{ab} = \hat{A}^\dagger \, R \, S_{ab} \, \hat{A}$. Then, however, the condition on $\hat{S}_{ab}$ to be a representation together with the fact that$\Gamma_{\underline{\alpha}}^{\phantom \alpha \underline{\beta}} := [\hat{A}_{\underline{\alpha}}, \hat{A}^{\dagger \underline{\beta}} ] $ yields $R = \Gamma^{-1}$, which in our case gives $\Gamma^{-1} = \Gamma$. Thus our construction is unique (within this context). 
\end{remark}

At this stage, it is quite instructive to point out the connection to general non-commutative (NC) relation among coordinates of a fuzzy space. Recall, that fuzzy space can be given in terms of fuzzy coordinates $x_i$ whose non-commutativity is governed by $\Theta_{ij}(x)$ in a standard way (cf. ftnt \ref{convention} for convention), $[x_i, x_j] = \lambda \, \Theta_{ij}(x)$, where $\lambda$ is the NC scale (and $\lambda \to 0$ is the commutative limit) and $\Theta_{ij}$ contains information about the structure constants of the underlying algebra. For instance, the Schwinger construction for $su(2)$  (i.e. $S^2_\lambda$ space) has $x^i = \lambda\, a^\dagger\,  \sigma^i \, a$, which yields $\Theta_{ij}(x) = 2 \epsilon_{ijk} x^k$, where $f_{ijk} = 2\, \epsilon_{ijk}$ are the corresponding structure constants.  In a similar way, if we take $x_{ab} = \lambda \, \hat{S}_{ab}$, we immediately get (cf. Proposition \ref{prop_oscil}) $[x_{ab}, x_{cd}] = \lambda^2 \, \hat{A}^\dagger \Gamma [S_{ab}, S_{cd}] \hat{A} = \lambda^2 \hat{A}^\dagger \Gamma {f_{abcd}}^{ef}\, S_{ef} \hat{A} = \lambda \, {f_{abcd}}^{ef} \, x_{ef}$, thus $\Theta_{abcd}(x) = {f_{abcd}}^{ef} x_{ef}$. Hence we can view $\hat{S}_{ab}$ as NC coordinates of the fuzzy $SU(2,2)_\lambda$ space. 
\begin{remark}
If one seeks for the commutative limit, then one has to replace $a_\alpha \mapsto \sqrt{\lambda}\, a_\alpha$ on $\mathcal{H}_F$ and then take $\lambda \to 0$ in the results. 
\end{remark}

Eventually, let us write explicitly the generators in oscillatory fundamental representation for future reference:
\begin{subequations}
\begin{align}
\hat{S}_{0i} = &\, \frac{1}{2}\, \left(\hat{a}^\dagger \sigma_i \hat{b} - \hat{b}^\dagger \sigma_i \hat{a} \right), \\
\hat{S}_{ij} = &\, \frac{i}{2} \varepsilon_{ijk} \left( \hat{a}^\dagger \sigma_k \hat{a} - \hat{b}^\dagger \sigma_k \hat{b} \right), \\
\hat{K}_0 = &\, - \frac{i}{2} \left(\hat{a}^\dagger + \hat{b}^\dagger \right)\left( \hat{a} + \hat{b} \right), \\
\hat{K}_i = &\, \frac{i}{2} \left(\hat{a}^\dagger + \hat{b}^\dagger \right)  \sigma_i  \left( \hat{a} + \hat{b} \right), \\
\hat{P}_0 = &\, - \frac{i}{2} \left(\hat{a}^\dagger - \hat{b}^\dagger \right)\left(\hat{a} - \hat{b} \right), \\
\hat{P}_i = &\, -\frac{i}{2} \left(\hat{a}^\dagger - \hat{b}^\dagger \right) \sigma_i \left(\hat{a} - \hat{b} \right), \\
\hat{D} = &\, \frac{1}{2} \left(\hat{b}^\dagger \hat{a} - \hat{a}^\dagger \hat{b} \right), \\
\hat{C}_1 = &\, \frac{1}{2} \left( \hat{a}^\dagger \hat{a} - \hat{b}^\dagger \hat{b} \right), 
\end{align}\label{gen_osc}
\end{subequations}
where we added $\hat{C}_1 :=  \widehat{\frac{1}{2}\, 1_{4 \times 4}}$, the central extension element. For the dual representation, see (\ref{gen_expl_dual}) for corresponding changes.  

\bigskip

\subsection{Norm of the boosted state on $\mathcal{H}_A$}
Let us denote $\Psi(\beta)$ the boosted basis state on $\mathcal{H}_A$, $\Psi(\beta) := e^{\beta \, \hat{n}_i \, \hat{S}_{0i}} \, \Psi_{nm} \equiv e^{\beta\,  \hat{n}_i \, \hat{S}_{0i}} \, \vert n_1, n_2 \rangle \langle m_1, m_2 \vert$ for some unit vector $\hat{n}_i$. We choose $\hat{n}_i =(0,0,1)$, i.e.  the boost in $3$-direction. We want to investigate, whether is the norm on $\mathcal{H}_F$ boost-invariant, i.e. $|| e^{\beta \hat{S}_{03}} \, \Psi \, \vert \mu_1, \mu_2 \rangle \, ||^2 \overset{\underset{\mathrm{?}}{}}{=} || \Psi \, \vert \mu_1, \mu_2 \rangle \, ||^2$ for any $\Psi \in \mathcal{A}$ and $| \mu_1, \mu_2 \rangle \in \mathcal{H}_F$. Recall that any $\Psi$ can be constructed from fuzzy monomials (cf. Remark \ref{monomial}) that can be viewed as $\vert n_1, n_2 \rangle \langle m_1, m_2 \vert \circ 1_{\mathcal{H}_A}$. Thus it is sufficient to determine whether $\big \vert \big \vert e^{\beta \hat{S}_{03}} \, \vert n_1, n_2 \rangle \langle m_1, m_2 \vert \big \vert \big \vert^2 \overset{\underset{\mathrm{?}}{}}{=} \big \vert \big \vert  \vert n_1, n_2 \rangle \langle m_1, m_2 \vert \big \vert \big \vert^2$, i.e. it is desirable to analyse the norm of the boosted state on $\mathcal{H}_A$ instead of $\mathcal{H}_F$. However, current basis $ \vert n_1, n_2 \rangle \langle m_1, m_2 \vert$ appears to be inconvenient, thus we will work in a more appropriate basis, the coherent states (CS) (\cite{Pe}). We will then be able to reconstruct the result in the former basis from the result in CS basis, see Appendix \ref{sec_ap_A}.

A (overcomplete) basis of coherent states on a Hilbert space generated from vacuum $\vert 0 \rangle$ by action of $a^\dagger$ is constructed for $\alpha \in \mathbb{C}$ as follows (\cite{Pe})
\begin{equation}
\vert \alpha \rangle := e^{-|\alpha|^2 /2 } \, e^{\alpha a^\dagger} \, \vert 0 \rangle
\label{coherent}
\end{equation}
and depending on the normalization of the states, one can expand the exponential.  Crucial property of CS is $\hat{a} \, \vert \alpha \rangle = \alpha \, \vert \alpha \rangle$ and among others are: the dot product  $\langle \alpha | \beta \rangle = e^{- | \alpha |^2/2 - | \beta |^2/2 + \bar{\alpha} \beta}$, hence $| \langle \alpha | \beta \rangle |^2 = e^{- | \alpha- \beta |^2} $. Finally, invariant norm is $d \alpha = \frac{1}{\pi} \, d \alpha^R \, d \alpha^I$, where $\alpha = \alpha^R + i \alpha^I$,  thus $\hat{1} = \frac{1}{\pi} \int d \alpha^R d \alpha^I | \alpha \rangle \langle \alpha |$.

In our case, we will construct CS on $\mathcal{H}_A$, where  effectively $\hat{a}^\dagger \, \vert n \rangle  = \vert n+1 \rangle$ (cf. Proposition \ref{prop_fockA}), hence $\langle m \vert n \rangle = n! \, \delta_{nm}$ and thus $\vert \alpha \rangle = e^{-|A|^2/2}\, \sum_{n=0}^\infty \, \frac{\alpha^n}{n!} \, \vert n \rangle$, which is altered in comparison with the original CS states as constructed in \cite{Pe}. Nevertheless, we will call them CS states by the abuse of the definition (\ref{coherent}) and we will use capital letters for CS (as the greek alphabet is already used for indices). Thus the starting point of our analysis is the CS basis on $\mathcal{H}_A$: 
\begin{equation}
\Phi_{A, B}:= | A_1, A_2 \rangle \langle B_1, B_2 \vert = e^{- ( | A_1 |^2 + | A_2 | + | B_1 |^2 + |B_2 |^2)/2} \, \sum_{n_1, n_2, m_1,m_2=0}^\infty \frac{A_1^{n_1}\,  A_2^{n_2} \, \bar{B}_1^{m_1}\,  \bar{B}_2^{m_2}}{n_1!\, n_2!\, m_1!\, m_2!}\, |n_1, n_2 \rangle \langle m_1, m_2 \vert .
\end{equation}

Our task is then  to compute the norm of $\Phi(\beta) = e^{\beta \hat{S}_{03}} \, \Phi_{A, B}$. We insert $\hat{1} = \int d C_1 \, d C_2 \, |C_1, C_2 \rangle \langle C_1, C_2| $: 
\begin{align}
|| \Phi(\beta) ||^2_{\mathcal{H}_A} =&\, \Tr[ \Phi(\beta)^\dagger \, \Phi(\beta)] \equiv \int d D_1 \, dD_2 \, \langle D_1, D_2 | \Phi(\beta)^\dagger \, \Phi(\beta) | D_1, D_2  \rangle \notag \\
=&\, \int d C_1 \, d C_2 \, d D_1 \, dD_2 \, \langle D_1, D_2 | \Phi(\beta)^\dagger |C_1, C_2 \rangle \langle C_1, C_2| \Phi(\beta) | D_1, D_2  \rangle \notag \\
=&\,  \int d C_1 \,d C_2 \, d D_1 \, dD_2 \, | \langle C_1, C_2| \Phi(\beta) | D_1, D_2  \rangle |^2 .\label{HS_norm}
\end{align}
Thus we focus on the matrix element $\langle \Phi(\beta) \rangle_{C D} =  \langle C_1, C_2| \, \Phi(\beta)\,  | D_1, D_2  \rangle$. We will proceed in the following steps:
\begin{enumerate}[label=(\roman*)]
\item Identification of $sl(2,\mathbb{C})$ algebra among terms in $\hat{S}_{03}$ and construction of an appropriate algebra ismomorphism for a more convenient action on $\mathcal{H}_A$ .
\item Gauss (Cartan) decomposition for the boost.
\item Succesive action of decomposed boost on CS. 
\end{enumerate}

Recall that $\hat{S}_{03} \Psi \equiv \hat{S}_{03}^{\bar{1}} \Psi -\hat{S}_{03}^{\bar{2}} \Psi$ with $\hat{S}_{03}^{\bar{\alpha}} \Psi = \frac{1}{2} \left( a^{\dagger \alpha} \Psi a_\alpha - a_\alpha \Psi a^{\dagger \alpha} \right)$ for fixed $\alpha$. We divide this operator into NC parts, $\hat{S}_{03}^{\bar{\alpha}} = \hat{T}_+^{\bar{\alpha}} + \hat{T}_-^{\bar{\alpha}}$, with (for fixed $\alpha$)
\begin{equation}
\hat{T}_+^{\bar{\alpha}} =\, +\frac{1}{2}\, \hat{a}^{\dagger \alpha} \,\varsigma_{\alpha \dot{\alpha}} \, \hat{b}^{\dot{\alpha}} ,\, \, \, \, \, 
 \hat{T}_-^{\bar{\alpha}} =\, - \frac{1}{2}\, \hat{b}_{\dot{\alpha}}^\dagger\, \varsigma^{\dot{\alpha} \alpha} \, \hat{a}_\alpha, \, \, \, \, \, 
 \hat{T}_0^{\bar{\alpha}} = \, \frac{1}{2} \left( \hat{a}^{\dagger \bar{\alpha}}\, \hat{a}_\alpha + \hat{b}_{\dot{\alpha}}^\dagger \,\varsigma^{\dot{\alpha} \alpha} \, \varsigma_{\alpha \dot{\beta}} \, \hat{b}^{\dot{\beta}} \right).
\end{equation}
These operators satisfy the $sl(2,\mathbb{C})$ algebra for both $\alpha = 1, 2$:
\begin{align}
[\hat{T}_+^{\bar{\alpha}}, \hat{T}_-^{\bar{\alpha}}] = \frac{1}{2} \,  \hat{T}_0^{\bar{\alpha}} , \, \, \, [ \hat{T}_0^{\bar{\alpha}},  \hat{T}_\pm^{\bar{\alpha}} ] = \pm \, \hat{T}_\pm^{\bar{\alpha}}.
\label{sl2_T}
\end{align}
In what follows, we drop the $\alpha$ index and carry out the computation for fixed $\alpha$. We then reconstruct the full result provided $[\hat{S}_{03}^{\bar{1}}, \hat{S}_{03}^{\bar{2}}]=0$. For this purpose, denote  $\hat{S} = \hat{T}_+ + \hat{T}_-$, $ \phi_{A,B} = \vert A \rangle \langle B \vert$ and $\phi(\beta) = e^{\beta \hat{S}} \, \phi_{AB}$. 

However, we soon encounter a problem, since none of these operators neither has $\vert n \rangle \langle m \vert$ as eigenstate, nor acts  as $\vert n \rangle \langle m \vert \mapsto n\,m\, \vert n-1 \rangle \langle m-1 \vert$, which is needed for the CS basis to recover the crutial property $a \vert \alpha \rangle = \alpha \vert \alpha \rangle$: 
\begin{subequations}
\begin{align}
\hat{T}_- \, \vert n \rangle \langle m \vert =&\, - \frac{1}{2}\, \vert n+1 \rangle \langle m+1 \vert - \frac{1}{2}\,  (n+m+1) \, \vert n \rangle \langle m \vert - \frac{1}{2}\, n\, m\, \vert n-1 \rangle \langle m-1 \vert, \\
\hat{T}_+ \, \vert n \rangle \langle m \vert =&\, \frac{1}{2}\, \vert n+1 \rangle \langle m+1 \vert, \\
\hat{T}_0 \, \vert n \rangle \langle m \vert = & \, \vert n+1 \rangle \langle m+1 \vert + \frac{1}{2}\, (n+m+1)\, \vert n \rangle \langle m \vert. 
\end{align}
\label{T_action}
\end{subequations}
In context of the sketched scheme, we construct linear combinations of these operators with the following properties: $\hat{\mathcal{T}}_-$ with the desired "annihilation" action $\hat{\mathcal{T}}_-\, \vert n \rangle \langle m \vert \propto n\,m\, \vert n-1 \rangle \langle m-1 \vert$; $\hat{\mathcal{T}}_0$ acting as the number operator (this will become useful in CS's), that is decomposable as $\hat{\mathcal{T}}_0 \propto \hat{N}(a) + \hat{N}(b)$, i.e. counting $n$'s and $m$'s indipendently; $\hat{\mathcal{T}}_+$ linearly independent of the two above. This leads us to the following 

\begin{prop}[$sl(2,\mathbb{C})$ isomorphism]
\label{iso}
Operators $\hat{\mathcal{T}}_- := \hat{T}_+ - \hat{T}_- - \hat{T}_0$, $\hat{\mathcal{T}}_+ := - \hat{T}_+$ and $\hat{\mathcal{T}}_0 := \hat{T}_0 - 2 \hat{T}_+$ comprise $sl(2,\mathbb{C})$ algebra isomorphic to (\ref{sl2_T}) and act on $\vert n \rangle \langle m \vert \in \mathcal{H}_A$ as follows:
\begin{equation}
\hat{\mathcal{T}}_- \, \vert n \rangle \langle m \vert = \frac{1}{2}\, n\,m\, \vert n-1 \rangle \langle m-1 \vert , \, \, \, \, \, \hat{\mathcal{T}}_+ \vert n \rangle \langle m \vert = -\frac{1}{2}\,  \vert n+1 \rangle \langle m+1 \vert , \, \, \, \, \, \hat{\mathcal{T}}_0 \, \vert n \rangle \langle m \vert = \frac{1}{2}\, (n+m+1)\, \vert n \rangle \langle m \vert. 
\end{equation}
\end{prop}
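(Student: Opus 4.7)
The statement splits cleanly into two independent claims, both of which reduce to direct algebraic verification, so my plan is to dispatch them in parallel without any conceptual machinery.

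For the algebra isomorphism, the plan is to take the three defining linear combinations $\hat{\mathcal{T}}_- = \hat{T}_+ - \hat{T}_- - \hat{T}_0$, $\hat{\mathcal{T}}_+ = -\hat{T}_+$, $\hat{\mathcal{T}}_0 = \hat{T}_0 - 2\hat{T}_+$ and compute all three relevant brackets $[\hat{\mathcal{T}}_+,\hat{\mathcal{T}}_-]$, $[\hat{\mathcal{T}}_0,\hat{\mathcal{T}}_+]$, $[\hat{\mathcal{T}}_0,\hat{\mathcal{T}}_-]$ by expanding bilinearly and substituting the structure relations (\ref{sl2_T}). For instance $[\hat{\mathcal{T}}_+,\hat{\mathcal{T}}_-] = [-\hat{T}_+,\hat{T}_+ - \hat{T}_- - \hat{T}_0] = [\hat{T}_+,\hat{T}_-] + [\hat{T}_+,\hat{T}_0] = \tfrac{1}{2}\hat{T}_0 - \hat{T}_+ = \tfrac{1}{2}(\hat{T}_0 - 2\hat{T}_+) = \tfrac{1}{2}\hat{\mathcal{T}}_0$, and the other two brackets are analogous (the $[\hat{\mathcal{T}}_0,\hat{\mathcal{T}}_-]$ case has the most terms but collapses to $-\hat{\mathcal{T}}_-$ after regrouping). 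Since the resulting commutators reproduce $[\hat{\mathcal{T}}_+,\hat{\mathcal{T}}_-]=\tfrac{1}{2}\hat{\mathcal{T}}_0$ and $[\hat{\mathcal{T}}_0,\hat{\mathcal{T}}_\pm]=\pm\hat{\mathcal{T}}_\pm$, the map $\hat{T}_i \mapsto \hat{\mathcal{T}}_i$ extends to an $sl(2,\mathbb{C})$ isomorphism (invertibility of the change of basis is obvious from the triangular form of the defining equations).

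For the action formulas, the plan is simply to substitute the three identities in (\ref{T_action}) into the definitions of $\hat{\mathcal{T}}_\pm, \hat{\mathcal{T}}_0$ acting on $\vert n \rangle \langle m \vert$. The $\hat{\mathcal{T}}_+$ case is immediate. For $\hat{\mathcal{T}}_0$ the $|n+1\rangle\langle m+1|$ contribution from $\hat{T}_0$ is exactly cancelled by the $-2\hat{T}_+$ piece, leaving $\tfrac{1}{2}(n+m+1)\vert n\rangle\langle m\vert$. For $\hat{\mathcal{T}}_-$ one collects the three outputs of $\hat{T}_+,\hat{T}_-,\hat{T}_0$: the $\vert n+1\rangle\langle m+1\vert$ terms add up to zero ($\tfrac{1}{2}+\tfrac{1}{2}-1$), the $\vert n\rangle\langle m\vert$ terms also cancel ($\tfrac{1}{2}(n{+}m{+}1)-\tfrac{1}{2}(n{+}m{+}1)$), and only the $\tfrac{1}{2}\,nm\,\vert n-1\rangle\langle m-1\vert$ term survives. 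This cancellation is, of course, the whole motivation behind the particular linear combination chosen in the proposition.

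There is no real obstacle: the entire proposition is a bookkeeping exercise whose only subtlety is keeping signs and coefficients consistent (in particular noting that $[\hat{T}_+,\hat{T}_0] = -\hat{T}_+$, not $+\hat{T}_+$). What is conceptually nontrivial is not the proof but the construction itself — discovering the combinations that isolate a pure number-operator $\hat{\mathcal{T}}_0$ and a pure lowering $\hat{\mathcal{T}}_-$ on the $\vert n\rangle\langle m\vert$ basis of $\mathcal{H}_A$, which is exactly what one needs for the subsequent Cartan/Gauss decomposition of $e^{\beta \hat{S}_{03}}$ and its evaluation on coherent states.
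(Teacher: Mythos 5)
Your proposal is correct and is exactly the computation the paper intends — its own proof just says ``Direct computation using (\ref{sl2_T}) and (\ref{T_action})'', and your bracket checks (including the sign point $[\hat{T}_+,\hat{T}_0]=-\hat{T}_+$) and the cancellation bookkeeping for $\hat{\mathcal{T}}_-$ carry that out faithfully. No differences worth noting.
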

\begin{proof}
Direct computation using (\ref{sl2_T}) and (\ref{T_action}). 
\end{proof}

Coming to the second step, we would like to exponentiate $e^{\beta \mathcal{S}} \equiv e^{ \beta \, (\hat{\mathcal{T}}_- + \hat{\mathcal{T}}_+ )}$. As $[\hat{\mathcal{T}}-, \hat{\mathcal{T}}_+] \neq 0$ and both CS and $\vert n \rangle \langle m \vert$ are not eigenstates of $\hat{S}$, the idea of the Gauss decomposition is to be used: 

\begin{lemma}[Gauss decomposition for $\hat{\mathcal{T}}$'s]
Let $\hat{\mathcal{T}}_\pm, \hat{\mathcal{T}}_0$ be as above. Then for any $\beta\in \mathbb{R}$ the following holds:
\begin{equation}
e^{ \beta \, (\hat{\mathcal{T}}_- + \hat{\mathcal{T}}_+ )} = e^{2 \tanh \frac{\beta}{2} \,  \hat{\mathcal{T}}_+} \, e^{-2 \log \left( \cosh \frac{\beta}{2} \right) \, \hat{\mathcal{T}}_0} \, e^{2 \tanh \frac{\beta}{2} \, \hat{\mathcal{T}}_-}.
\end{equation}
\end{lemma}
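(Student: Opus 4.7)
The plan is to recognize the identity as the standard Gauss decomposition for $SL(2,\mathbb{C})$ after a rescaling of generators, and then to verify it in a faithful $2\times 2$ representation where it reduces to multiplying three explicit triangular or diagonal matrices. First, I would set $E_{\pm}:=2\hat{\mathcal{T}}_{\pm}$ and $H:=2\hat{\mathcal{T}}_0$. The commutators $[\hat{\mathcal{T}}_+,\hat{\mathcal{T}}_-]=\tfrac{1}{2}\hat{\mathcal{T}}_0$ and $[\hat{\mathcal{T}}_0,\hat{\mathcal{T}}_\pm]=\pm\hat{\mathcal{T}}_\pm$ inherited from Proposition \ref{iso} then become the Chevalley form $[E_+,E_-]=H$, $[H,E_\pm]=\pm 2E_\pm$, and with $\alpha:=\beta/2$ the target identity turns into
\[
e^{\alpha(E_++E_-)}=e^{\tanh\alpha\,E_+}\,e^{-\log\cosh\alpha\cdot H}\,e^{\tanh\alpha\,E_-}.
\]

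Next, I would argue that this is a purely Lie-algebraic identity: both sides lie in the formal completion of $U(sl(2,\mathbb{C}))$, and after reducing to PBW normal form their coefficients depend only on the commutation relations. Hence verifying it in any faithful representation suffices, and I would use the defining $2\times 2$ one, $E_+=\bigl(\begin{smallmatrix}0&1\\0&0\end{smallmatrix}\bigr)$, $E_-=\bigl(\begin{smallmatrix}0&0\\1&0\end{smallmatrix}\bigr)$, $H=\mathrm{diag}(1,-1)$. The left-hand side is easy, since $E_++E_-$ has eigenvalues $\pm 1$, giving $\cosh\alpha\,I+\sinh\alpha\,(E_++E_-)$. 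The right-hand side is a product of an upper triangular, a diagonal and a lower triangular matrix, and a direct multiplication collapses to the same answer after the identity $(1+\sinh^2\alpha)/\cosh\alpha=\cosh\alpha$. Translating back via $E_\pm=2\hat{\mathcal{T}}_\pm$, $H=2\hat{\mathcal{T}}_0$, $\alpha=\beta/2$ produces precisely the factors of $2$ appearing in the statement.

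The only subtlety is the transfer from the $2\times 2$ verification to an identity on the infinite-dimensional $\mathcal{H}_A$; appealing to PBW is the quickest but least explicit way. A cleaner self-contained alternative is the ODE method: both sides are the unique solution of $\Psi'(\beta)=(\hat{\mathcal{T}}_++\hat{\mathcal{T}}_-)\,\Psi(\beta)$ with $\Psi(0)=\id$. Differentiating the right-hand side requires commuting $\hat{\mathcal{T}}_-$ past $e^{2\tanh(\beta/2)\hat{\mathcal{T}}_+}$ and through $e^{-2\log\cosh(\beta/2)\hat{\mathcal{T}}_0}$, which closes inside the $sl(2)$ triple thanks to $e^{c\hat{\mathcal{T}}_+}\hat{\mathcal{T}}_- e^{-c\hat{\mathcal{T}}_+}=\hat{\mathcal{T}}_-+\tfrac{c}{2}\hat{\mathcal{T}}_0-\tfrac{c^2}{4}\hat{\mathcal{T}}_+$ and $e^{c\hat{\mathcal{T}}_0}\hat{\mathcal{T}}_\pm e^{-c\hat{\mathcal{T}}_0}=e^{\pm c}\hat{\mathcal{T}}_\pm$. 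I expect the main tedium to be reassembling the resulting coefficient and checking that the $\hat{\mathcal{T}}_0$ and $\hat{\mathcal{T}}_+$ contributions cancel so that only $\hat{\mathcal{T}}_++\hat{\mathcal{T}}_-$ survives; this is exactly where the specific functions $\tanh(\beta/2)$ and $\log\cosh(\beta/2)$ make their appearance, matching their own derivatives.
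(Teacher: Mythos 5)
Your proposal is correct and follows essentially the same route as the paper: rescale to the Chevalley generators, pass to the defining $2\times 2$ representation of $sl(2,\mathbb{C})$, and read off $a=c=2\tanh\frac{\beta}{2}$, $b=-2\log\cosh\frac{\beta}{2}$ by multiplying the triangular and diagonal factors. Your added remark about justifying the transfer from the $2\times 2$ check to the operator identity on $\mathcal{H}_A$ (via PBW or the ODE/uniqueness argument) addresses a point the paper's proof leaves implicit, and is a welcome refinement rather than a divergence.
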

\begin{proof}
Utilizing the $sl(2,\mathbb{C})$ isomorphism
\begin{equation}
2 \, \hat{\mathcal{T}}_+ \leftrightarrow X = \begin{pmatrix} 0 & 1 \\ 0 & 0 \end{pmatrix}, \, \, \, \, \, 2\,  \hat{\mathcal{T}}_0 \leftrightarrow H = \begin{pmatrix} 1 & 0 \\0 & -1 \end{pmatrix}, \, \, \, \, \,  2\,  \hat{\mathcal{T}}_- \leftrightarrow Y = \begin{pmatrix} 0 & 0 \\1 & 0 \end{pmatrix}
\end{equation}
we have 
\begin{equation}
e^{ \beta \, (\hat{\mathcal{T}}_- + \hat{\mathcal{T}}_+ )} \leftrightarrow \exp \left(\frac{\beta}{2} \, \begin{pmatrix}
0 & 1 \\ 1 & 0 \end{pmatrix}\right) = \begin{pmatrix} 
\cosh \frac{\beta}{2} & \sinh  \frac{\beta}{2} \\ \sinh  \frac{\beta}{2} & \cosh  \frac{\beta}{2} \end{pmatrix}
\end{equation}
on one side and 
\begin{equation}
e^{a \hat{\mathcal{T}}_+} \leftrightarrow \begin{pmatrix}
1 & a/2 \\ 0 & 1
\end{pmatrix} , \, \, \, e^{b \, \hat{\mathcal{T}}_0} \leftrightarrow \begin{pmatrix} 
e^{b/2} & 0 \\ 0 & e^{-b/2} \end{pmatrix}, \, \, \,  e^{c\, \hat{\mathcal{T}}_-} \leftrightarrow \begin{pmatrix}
1 & c/2 \\ 0 & 1
\end{pmatrix}
\end{equation}
on the other. Then matrix equation $e^{ \beta \, (\hat{\mathcal{T}}_- + \hat{\mathcal{T}}_+ )} = e^{a \hat{\mathcal{T}}_+} \, e^{b \, \hat{\mathcal{T}}_0} \, e^{c \, \hat{\mathcal{T}}_-}$ is condition on $a,b,c$ as functions of $\beta$ giving
\begin{equation}
a(\beta)= c(\beta) = 2\, \tanh \frac{\beta}{2},\, \, \, \, \, b(\beta) = - 2\, \log \left( \cosh \frac{\beta}{2} \right). 
\end{equation}
\end{proof}

Thanks to these two results we can calculate the matrix element $\langle \phi(\beta) \rangle_{CD}$. 

\begin{lemma}
\label{lemma_2}
Let $\hat{\mathcal{T}}_\pm, \hat{\mathcal{T}}_0$ be as above, let $\phi_{AB} = \vert A \rangle \langle B \vert$ be CS basis on $\mathcal{H}_A$ (with $\alpha$ suppressed), and let us denote $\phi(\beta) = e^{\beta\,  (\hat{\mathcal{T}}_- + \hat{\mathcal{T}}_+ )}\, \phi_{AB}$ and $\langle \phi(\beta) \rangle_{CD}= \langle C \vert \phi(\beta) \vert D \rangle$. Then 
\begin{equation}
\langle \phi(\beta) \rangle_{CD}=\frac{1}{\cosh \frac{\beta}{2}}\, e^{-\left( |A|^2 + |B|^2 + |C|^2 +|D|^2\right)/2} \, e^{\tanh \frac{\beta}{2}\, \left(A\,  \overline{B} - \overline{C} \, D \right)}\, e^{\left( A \, \overline{C} + \overline{B}\, D \right)/\cosh \frac{\beta}{2}}.\label{phiCD}
\end{equation}
\end{lemma}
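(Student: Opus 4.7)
The plan is to apply the Gauss decomposition from the preceding lemma and act on $\phi_{AB}$ with the three exponential factors in turn, exploiting that $\phi_{AB}$ is an eigenstate of $\hat{\mathcal{T}}_-$ and that $\hat{\mathcal{T}}_0$ is diagonal on the basis $\vert n\rangle\langle m\vert$; the outermost $e^{2t\hat{\mathcal{T}}_+}$ (with $t=\tanh(\beta/2)$) will then be pushed into the Fock matrix element $\langle C\vert\cdot\vert D\rangle$ rather than acted with on the CS directly. Writing $c=\cosh(\beta/2)$, the three factors contribute respectively the $e^{tA\bar B}$, the $1/c$ rescaling, and the $e^{-t\bar C D}$ of (\ref{phiCD}).

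For the innermost factor, a short reindexing of $\sum A^n\bar B^m/(n!m!)\cdot \frac{1}{2}nm\,\vert n-1\rangle\langle m-1\vert$ using Proposition~\ref{iso} yields $\hat{\mathcal{T}}_-\phi_{AB}=\frac{1}{2}A\bar B\,\phi_{AB}$, hence $e^{2t\hat{\mathcal{T}}_-}\phi_{AB}=e^{tA\bar B}\phi_{AB}$. For the middle factor, since $\hat{\mathcal{T}}_0\vert n\rangle\langle m\vert=\frac{1}{2}(n+m+1)\vert n\rangle\langle m\vert$, the exponential $e^{-2\log c\,\hat{\mathcal{T}}_0}$ multiplies each basis state by $c^{-(n+m+1)}$; absorbing $c^{-n}$ and $c^{-m}$ into the coefficients $A^n,\bar B^m$ rescales $A\mapsto A/c$, $\bar B\mapsto\bar B/c$ in the CS expansion, while the leftover $c^{-1}$ produces the overall prefactor of (\ref{phiCD}). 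For the outermost factor, since $\hat{\mathcal{T}}_+=-\hat T_+$ acts on any fuzzy function as $\hat{\mathcal{T}}_+\Psi=-\frac{1}{2}a^\dagger\Psi a$, and since $\langle C\vert a^\dagger=\bar C\langle C\vert$, $a\vert D\rangle=D\vert D\rangle$ on $\mathcal{H}_F$, the matrix element turns $\hat{\mathcal{T}}_+$ into multiplication by $-\frac{1}{2}\bar C D$, giving $\langle C\vert e^{2t\hat{\mathcal{T}}_+}\Psi\vert D\rangle=e^{-t\bar C D}\langle C\vert\Psi\vert D\rangle$.

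It then remains to evaluate the residual matrix element of the rescaled coherent state, which I will do by writing $\phi_{A',B'}$ as the normal-ordered fuzzy function $e^{-(|A'|^2+|B'|^2)/2}e^{A'a^\dagger}e^{\bar B'a}\in\mathcal{A}$ and applying $\langle C\vert a^\dagger=\bar C\langle C\vert$ and $a\vert D\rangle=D\vert D\rangle$ once more; this generates the cross-term exponent $(A\bar C+\bar B D)/c$ of (\ref{phiCD}). Assembling the four resulting exponential factors and using $1-\tanh^2(\beta/2)=\cosh^{-2}(\beta/2)$ to recombine the Gaussian exponents in $|A|^2,|B|^2,|C|^2,|D|^2$ produces the claimed formula; as a sanity check, setting $\beta=0$ should reduce it to $\langle C\vert\phi_{AB}\vert D\rangle$ computed directly. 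The principal obstacle is purely bookkeeping: one must consistently track the eight quadratic expressions $|A|^2,|B|^2,|C|^2,|D|^2,A\bar B,\bar CD,A\bar C,\bar B D$ across three qualitatively different kinds of operator action (eigenvalue on CS, diagonal rescaling, and reduction inside the matrix element), and cleanly recombine the various $c$-rescalings from the middle factor with the Gaussian normalizations of the CS so that every $\cosh(\beta/2)$ ends up in the correct place.
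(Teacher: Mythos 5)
Your proof follows the paper's route essentially step for step — the same Gauss decomposition, the same eigenvalue relation $\hat{\mathcal{T}}_-\,\phi_{AB}=\tfrac{1}{2}A\bar{B}\,\phi_{AB}$, and the same $\hat{\mathcal{T}}_0$-rescaling $A\mapsto A/\cosh\frac{\beta}{2}$ producing the prefactor — differing only in that you convert $e^{2\tanh\frac{\beta}{2}\hat{\mathcal{T}}_+}$ into the multiplier $e^{-\tanh\frac{\beta}{2}\,\bar{C}D}$ inside the matrix element, where the paper expands the series and resums at the end; the two are equivalent. One caution on your final step: the residual overlap must be evaluated with the paper's $\mathcal{H}_A$ pairing, $\langle C\vert\,(\vert n\rangle\langle m\vert)\,\vert D\rangle=e^{-(|C|^2+|D|^2)/2}\,\bar{C}^{\,n}D^{\,m}$, so that $\langle C\vert\,1_{\mathcal{A}}\,\vert D\rangle=e^{-(|C|^2+|D|^2)/2}$; treating $\vert A'\rangle\langle B'\vert$ as an honest operator sandwiched between $\mathcal{H}_F$ coherent states would leave a spurious extra factor $e^{\bar{C}D}$ from the overlap $\langle C\vert D\rangle$, a discrepancy your proposed $\beta=0$ sanity check would expose.
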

\begin{proof}
The first exponential is trivial, as
\begin{align}
\hat{\mathcal{T}}_- \, \vert A \rangle \langle B \vert =&\, e^{-(|A|^2+ |B|^2)/2}\, \sum_{n,m=0}^\infty \frac{A^n \, \bar{B}^m}{n!\,m!}\, \hat{\mathcal{T}}_- \, \vert n \rangle \langle m \vert = e^{-(|A|^2+ |B|^2)/2}\, \sum_{n,m=0}^\infty \frac{A^n \, \bar{B}^m}{n!\,m!}\, \frac{1}{2}\, n\,m\, \vert n-1 \rangle \langle m-1 \vert  \notag \\
=&\,\frac{1}{2}\, A\, \bar{B}\, e^{-(|A|^2+ |B|^2)/2}\, \sum_{n,m=1}^\infty \frac{A^{n-1} \, \bar{B}^{m-1}}{(n-1)!\,(m-1)!}\,  \vert n -1\rangle \langle m -1\vert = \frac{1}{2}\, A\, \bar{B}\,\vert A \rangle \langle B \vert,
\end{align}
thus $e^{2 \tanh \frac{\beta}{2} \, \hat{\mathcal{T}}_-} \, \vert A \rangle \langle B \vert = e^{ \tanh \frac{\beta}{2}\, A \bar{B}} \, \vert A \rangle \langle B\vert$ yields only a prefactor. 

In the second one, rescaling of the CS occurs:
\begin{align}
e^{-2 \, \log \left( \cosh \frac{\beta}{2} \right) \, \hat{\mathcal{T}}_0} \, \vert A \rangle \langle B \vert =&\, e^{-(|A|^2+ |B|^2)/2}\, \sum_{n,m=0}^\infty \frac{A^n \, \bar{B}^m}{n!\,m!}\, \sum_{k=0}^\infty \, \frac{\left(-2 \,\log  \cosh \frac{\beta}{2} \right)^k}{k!} \,\hat{\mathcal{T}}_0^k\, \vert n \rangle \langle m \vert \notag \\
=&\, e^{-(|A|^2+ |B|^2)/2}\, \sum_{n,m=0}^\infty \frac{A^n \, \bar{B}^m}{n!\,m!}\, \sum_{k=0}^\infty \, \frac{\left(-2 \,\log  \cosh \frac{\beta}{2} \, \frac{n+m+1}{2}\right)^k}{k!} \,\vert n \rangle \langle m \vert \notag \\
=&\, e^{-(|A|^2+ |B|^2)/2}\, \sum_{n,m=0}^\infty \frac{A^n \, \bar{B}^m}{n!\,m!}\, e^{\left(- \log \cosh \frac{\beta}{2} \right) \,(n+m+1)} \vert n \rangle \langle m \vert \notag \\
=&\, \frac{1}{\cosh \frac{\beta}{2}}\, e^{-(|A|^2+ |B|^2)/2}\, \sum_{n,m=0}^\infty  \frac{\left(A/ \cosh \frac{\beta}{2}\right)^n \, \left(\bar{B}/ \cosh \frac{\beta}{2}\right)^m}{n!\,m!}\,\vert n \rangle \langle m \vert \notag \\
=&\, \frac{1}{\cosh \frac{\beta}{2}}\, e^{-\frac{1}{2}\, \tanh^2 \frac{\beta}{2}\, \left(|A|^2+|B|^2 \right)} \, \vert A' \rangle \langle B' \vert, 
\label{rescalingAB}
\end{align}
where $X' = X/ \cosh \frac{\beta}{2}$. Thus 
\begin{equation}
e^{-2 \log \left( \cosh \frac{\beta}{2} \right) \, \hat{\mathcal{T}}_0} \, e^{2 \tanh \frac{\beta}{2} \, \hat{\mathcal{T}}_-} \, \vert A \rangle \langle B \vert = \frac{e^{\tanh \frac{\beta}{2}\, A \bar{B}}\, e^{-\frac{1}{2}\, \tanh^2 \frac{\beta}{2}\, \left(|A|^2+|B|^2 \right)} }{\cosh \frac{\beta}{2}}\, \vert A' \rangle \langle B' \vert =: \mathcal{V}(\beta)_{AB} \,\vert A' \rangle \langle B' \vert 
\end{equation}
so far. 

As for the third one, no closed formula can be obtained:
\begin{align}
\phi(\beta) =&\, e^{2 \tanh \frac{\beta}{2} \,  \hat{\mathcal{T}}_+} \, \mathcal{V}_{AB}(\beta) \, \vert A' \rangle \langle B' \vert = \mathcal{V}_{AB}(\beta) \, e^{-(|A'|^2+ |B'|^2)/2}\, \sum_{n,m=0}^\infty  \frac{A^{\prime \,n} \, \bar{B}^{\prime \,m}}{n!\,m!}\, \sum_{k=0}^\infty \, \frac{\left(2\, \tanh \frac{\beta}{2} \right)^k}{k!} \,\hat{\mathcal{T}}_+^k\, \vert n \rangle \langle m \vert \notag \\
=& \, \, \mathcal{U}_{AB}(\beta)\, \sum_{n,m=0}^\infty  \frac{A^{\prime \,n} \, \bar{B}^{\prime \,m}}{n!\,m!}\, \sum_{k=0}^\infty \, \frac{\left(- \tanh \frac{\beta}{2} \right)^k}{k!} \, \vert n+k \rangle \langle m+k \vert. 
\end{align} 

Using the fact that $\langle n \vert m \rangle = n! \, \delta_{nm}$ we have non-zero dot product of the CS with the standard states: $\langle C \vert n+k \rangle = e^{-|C|^2/2}\, \sum_{l=0}^\infty \, \frac{C^l}{l!}\, \langle l \vert n+k \rangle = e^{-|C|^2/2}\, C^{n+k}$, similarly for the reversed order. Hence
\begin{align}
\langle \phi(\beta) \rangle_{CD} =&\, \mathcal{U}_{AB}(\beta)\,e^{-\left(|C|^2+|D|^2\right)/2}\, \sum_{n,m=0}^\infty  \frac{A^{\prime \,n} \, \bar{B}^{\prime \,m}}{n!\,m!}\, \sum_{k=0}^\infty \, \frac{\left(- \tanh \frac{\beta}{2} \right)^k}{k!} \, \bar{C}^{n+k}\, D^{m+k} \notag \\
 =&\, \mathcal{U}_{AB}(\beta) e^{-\left(|C|^2+|D|^2\right)/2}\, e^{A^\prime \, \bar{C} + \bar{B}^\prime \, D}\, e^{-\tanh \frac{\beta}{2}\, \bar{C}\,D} \notag \\
 \equiv &\, \frac{1}{\cosh \frac{\beta}{2}}\, e^{-\left( |A|^2 + |B|^2 + |C|^2 +|D|^2\right)/2} \, e^{\tanh \frac{\beta}{2}\, \left(A\,  \overline{B} - \overline{C} \, D \right)}\, e^{\left( A \, \overline{C} + \overline{B}\, D \right)/\cosh \frac{\beta}{2}}.
\end{align}
\end{proof}

Returning back to (\ref{HS_norm}) leads immediately to the following 
\begin{thm}[Boost invariance]
\label{thm_1}
The CS basis on $\mathcal{H}_A$ is invariant w.r.t. boost of $su(2,2)$ generated by $\hat{S}_{0i}$.
\end{thm}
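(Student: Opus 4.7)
The theorem claims $\|\Phi(\beta)\|^2_{\mathcal{H}_A}=\|\Phi_{A,B}\|^2_{\mathcal{H}_A}$ for each CS basis element $\Phi_{A,B}$ and each boost parameter $\beta$ (taking $\hat n=(0,0,1)$; the general direction follows by rotational covariance of the construction). By formula (\ref{HS_norm}) this reduces to showing that the double Gaussian integral
\[
I(\beta) := \int dC_1\, dC_2\, dD_1\, dD_2\, |\langle\Phi(\beta)\rangle_{CD}|^2
\]
is independent of $\beta$, and my plan is an explicit computation built on Lemma \ref{lemma_2}.

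The first step is to factorize the matrix element across the two oscillator copies. Because $\hat S_{03}^{\bar 1}$ and $\hat S_{03}^{\bar 2}$ involve disjoint sets of $\hat a$'s and $\hat b$'s, they commute and $e^{\beta\hat S_{03}}=e^{\beta\hat S_{03}^{\bar 1}}\,e^{-\beta\hat S_{03}^{\bar 2}}$, while $\Phi_{A,B}=\phi^{(1)}_{A_1,B_1}\,\phi^{(2)}_{A_2,B_2}$. Hence $|\langle\Phi(\beta)\rangle_{CD}|^2$ splits as a product of two single-oscillator pieces, each given by Lemma \ref{lemma_2} with arguments $+\beta$ and $-\beta$ respectively, so it suffices to prove $\beta$-independence of the one-oscillator integral
\[
I_1(\beta;A,B) := \int d^2C\, d^2D\, |\langle\phi(\beta)\rangle_{CD}|^2.
\]

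The second step is to perform $I_1(\beta;A,B)$ explicitly. Using $|e^X|^2=e^{2\,\mathrm{Re}\,X}$ and (\ref{phiCD}), the integrand is a complex Gaussian with quadratic form $-z^\dagger M z$, linear source $2\,\mathrm{Re}(u^\dagger z)$, and a $(C,D)$-independent prefactor, where
\[
z=\begin{pmatrix}C\\D\end{pmatrix},\qquad M=\begin{pmatrix}1 & t\\ t & 1\end{pmatrix},\qquad u=\frac{1}{\cosh(\beta/2)}\begin{pmatrix}A\\B\end{pmatrix},\qquad t:=\tanh(\beta/2),
\]
and the prefactor equals $\mathrm{sech}^2(\beta/2)\exp[-|A|^2-|B|^2+2t\,\mathrm{Re}(A\bar B)]$. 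The standard identity $\int\frac{d^2z_1\,d^2z_2}{\pi^2}\,e^{-z^\dagger Mz+2\,\mathrm{Re}(u^\dagger z)}=(\det M)^{-1}e^{u^\dagger M^{-1}u}$, combined with $\det M=1-t^2=\mathrm{sech}^2(\beta/2)$ and a direct calculation yielding $u^\dagger M^{-1}u=|A|^2+|B|^2-2t\,\mathrm{Re}(A\bar B)$, produces perfect cancellation of every $\beta$-dependent piece, leaving $I_1(\beta;A,B)\equiv 1=I_1(0;A,B)$.

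The main obstacle is the coupled Gaussian: the $\bar C D$ term in the exponent (coming from the $\tanh(\beta/2)$ piece of Lemma \ref{lemma_2}) prevents separate $C$- and $D$-integration and forces a two-variable inversion of $M$. All hyperbolic factors must cancel on the nose through the single identity $1-\tanh^2(\beta/2)=\mathrm{sech}^2(\beta/2)$, which ties the determinant of the coupling matrix to the explicit $\cosh^{-2}$ in the prefactor. A conceptually lighter but less explicit alternative would invoke the unitarity part of Proposition \ref{prop_oscil} (namely $\hat S_{ab}^\dagger=-\hat S_{ab}$ on $\mathcal{H}_A$, which makes $e^{\beta\hat S_{03}}$ automatically isometric), but the CS computation above is precisely what the sequence of preparatory lemmas seems designed to produce.
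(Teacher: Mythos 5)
Your proposal is correct and follows essentially the same route as the paper: both reduce to the single-oscillator case via $[\hat S_{03}^{\bar 1},\hat S_{03}^{\bar 2}]=0$, invoke Lemma \ref{lemma_2} for $\langle\phi(\beta)\rangle_{CD}$, and show that the Gaussian integral over $C,D$ cancels all $\beta$-dependence, leaving norm $1$. The only (cosmetic) difference is that you perform one complex Gaussian integral with the $2\times 2$ coupling matrix $M$ and the identity $1-\tanh^2\frac{\beta}{2}=\operatorname{sech}^2\frac{\beta}{2}$, whereas the paper splits into real and imaginary parts and evaluates two decoupled real Gaussians yielding the same cancellation.
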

\begin{proof}
Let us first carry on the calculation for $\langle \phi(\beta) \rangle_{CD}$ , i.e. $|| \phi(\beta) ||_{\mathcal{H}_A}^2 = \int dC\, dD\, |\langle \phi(\beta) \rangle_{CD}|^2$. Using Lemma \ref{lemma_2} we have
\begin{equation}
||\phi(\beta)||^2_{\mathcal{H}_A} = \int dC\, dD\, \frac{1}{\cosh^2 \frac{\beta}{2}}\, e^{-\left( |A|^2 + |B|^2 + |C|^2 +|D|^2\right)} \, e^{\tanh \frac{\beta}{2}\, \left(A\,  \overline{B} - \overline{C} \, D +h.c.\right)}\, e^{\left( A \, \overline{C} + \overline{B}\, D +h.c.\right)/\cosh \frac{\beta}{2}},
\end{equation}
where we imposed the $h.c.$ notion in an obvious way, $\bar{X} Y + h.c. \equiv \bar{X} Y + X \bar{Y} =  2\, (X_R Y_R + X_I Y_I )$ (subscripts $R,I$ for real and imaginary part respectively). Integrating over real and imaginary parts separately we get (recall the norm, cf. text  (\ref{coherent}) bellow):
\begin{align}
\dots =  \frac{1}{\cosh^2 \frac{\beta}{2}}\, e^{-\left( |A|^2 + |B|^2\right)} \, e^{\tanh \frac{\beta}{2}\, A\,  \overline{B}  +h.c.} \,&\, \int \frac{dC_R \, dD_R}{\pi} \, e^{-C_R^2 -D_R^2} \, e^{-2 \tanh \frac{\beta}{2}\,  C_R D_R} \, e^{\frac{2}{\cosh \frac{\beta}{2}} (C_R A_R + D_R B_R)} \cdot \notag \\
 \cdot &\, \int \frac{dC_I \, dD_I}{\pi} \, e^{-C_I^2 -D_I^2} \, e^{-2 \tanh \frac{\beta}{2}\, C_I D_I} \, e^{\frac{2}{\cosh \frac{\beta}{2}} (C_I A_I + D_I B_I)}.
\end{align}
Both real and imaginary integrals yield $\cosh \frac{\beta}{2}\, e^{A_\ast^2+B_\ast^2- 2 \tanh \frac{\beta}{2} A_\ast B_\ast}$ for $\ast = R,I$, which precisely compensates for the prefactor in the expression, thus $||\phi(\beta)||^2_{\mathcal{H}_A}=1$. 

Since the cases $\alpha=1$ and $\alpha=2$ are independent from the construction ($[\hat{\mathcal{T}}_\ast^{\bar{1}},\hat{\mathcal{T}}_\ast^{\bar{2}}]=0$), the same result applies for $\Psi(\beta) \equiv e^{\beta \, \hat{S}_{03}} \, | A_1, A_2 \rangle \langle B_1, B_2 \vert$. Eventually, we chose the $3$-direction in the boost, but the result does not depend on any direction, thus rotation to general direction is trivial as well.  
\end{proof}

\begin{remark}
On account of this theorem, one can compute the norm of the boosted state in the former (inconvenient) basis $\vert n_1, n_2 \rangle$ exactly. This is done in Appendix \ref{sec_ap_A}, where we arrive at the same result. In light of the presented calculation, we will be able to derive a non-trivial identity among hypergeometric functions, cf. Proposition \ref{parseval}. 
\end{remark}

\subsection{Classical fuzzy fields with $m=0$ and $\alpha=j_1-j_2$}
The oscillator construction above enables us to construct first class of classical fuzzy fields being an irreducible representation of $su(2,2)$ (fuzzy) algebra. To do so, we 
\begin{enumerate}[label=(\roman*)]
\item construct the Casimir oprators for the fundamental and dual representation restricted to subalgebra $so(1,3)$, i.e. Poincar\'e invariants;
\item construct the Casimir operators for the both representations of $so(2,4)$;
\item find the spectrum of $\hat{S}_{05}$. 
\end{enumerate}
Then following \cite{Ma} we will be able to identify the class of the representations. 

Starting with $so(1,3)$ subalgebra, we calculate the spectrum of $\hat{P}^2$ and $\hat{W}^2$, where $W_\mu= \frac{1}{2}\, \epsilon_{\mu \nu \rho \sigma} \, S^{\nu \rho} \, P^\sigma$ is the Pauli-Lubanski vector. The first invariant is
\begin{align}
\hat{P}_\mu \hat{P}^\mu =&\, (\hat{P}_0)^2- \delta^{ij} \hat{P}_i \hat{P}_j \notag \\
=&\, - \frac{1}{4} \left[ (\hat{a}^\dagger - \hat{b}^\dagger )( \hat{a} - \hat{b} )(\hat{a}^\dagger - \hat{b}^\dagger)( \hat{a} - \hat{b}) - \delta^{ij} (\hat{a}^\dagger - \hat{b}^\dagger )\sigma_i ( \hat{a} - \hat{b} )(\hat{a}^\dagger - \hat{b}^\dagger)\sigma_j( \hat{a} - \hat{b}) \right].
\end{align} 
We will work out explicitly the first term proportional to four $\hat{a}$'s. Writing it in indices gives 
\begin{align}
\hat{a}^\dagger \hat{a}\,  \hat{a}^\dagger \hat{a} - \delta^{ij} \hat{a}^\dagger \sigma_i \hat{a} \, \hat{a}^\dagger \sigma_j \hat{a} = \hat{a}^{\dagger \alpha} \hat{a}_\alpha \, \hat{a}^{\dagger \beta} \hat{a}_\beta - \delta^{ij} \, \hat{a}^{\dagger \alpha} \varsigma_{\alpha \dot{\alpha}} \sigma_i^{\dot{\alpha} \beta} \hat{a}_{\beta} \,\hat{a}^{\dagger \gamma} \varsigma_{\gamma \dot{\gamma}} \sigma_i^{\dot{\gamma} \beta} \hat{a}_{\delta} .
\end{align}
We use the completness relation among the Pauli matrices ($
\delta^{ij} \sigma_i^{\dot{\alpha} \beta} \sigma_j^{\dot{\gamma} \delta} = 2 \varsigma^{\dot{\alpha} \delta} \varsigma^{ \dot{\gamma} \beta} - \varsigma^{\dot{\alpha} \beta} \varsigma^{\dot{\gamma} \delta}$) and the commutator relations on $\mathcal{H}_A$ to get 
\begin{align}
\dots = \hat{a}^{\dagger \alpha} \hat{a}_\alpha \, \hat{a}^{\dagger \beta} \hat{a}_\beta - 2 \varsigma^{\dot{\alpha} \delta} \varsigma^{ \dot{\gamma} \beta} \varsigma_{\alpha \dot{\alpha}}  \varsigma_{\gamma \dot{\gamma}} \, \hat{a}^{\dagger \alpha} \hat{a}_\beta \, \hat{a}^{\dagger \gamma} \hat{a}_\delta +
\varsigma^{\dot{\alpha} \beta} \varsigma^{\dot{\gamma} \delta} \varsigma_{\alpha \dot{\alpha}}  \varsigma_{\gamma \dot{\gamma}}\, \hat{a}^{\dagger \alpha} \hat{a}_\beta \, \hat{a}^{\dagger \gamma} \hat{a}_\delta = - 2 \hat{a}^\alpha \, \hat{a}_\alpha.
\end{align}
The rest of the calculation is analogous and the complete result reads $\hat{P}^2=0$, i.e. we deal with a massless representation. One can check in a similar way, that also $\hat{W}^2=0$, which points towards the fact that $W_\mu$ is parallel to $P_\mu$, hence representations with helicity. Then the proper invariant is $W_0^2 = W_3^2$. In this case, $W_0 = - J_i\, P_i$ with $J_i = \frac{1}{2}\, \epsilon_{ijk} \, S_{jk}$. Direct action of $\hat{W}_0=- \hat{J}_i \, \hat{P}_i$ on the basis in $\mathcal{H}_A$ yields
\begin{equation}
\hat{W}_0\, \vert n_1, n_2 \rangle \langle m_1, m_2 \vert = n_1 \, m_1\, \frac{\kappa-2}{4}\, \vert n_1-1, n_2 \rangle \langle m_1-1, m_2 \vert + n_2\, m_2 \, \frac{\kappa+2}{4}\, \vert n_1, n_2-1 \rangle \langle m_1, m_2-1 \vert, 
\end{equation}
or in CS basis 
\begin{equation}
\hat{W}_0\, \vert A_1,A_2 \rangle \langle B_1, B_2 \vert = \left(\frac{\kappa-2}{4}\, A_1 \, \bar{B}_1 + \frac{\kappa+2}{4}\, A_2\, \bar{B}_2 \right) \vert A_1,A_2 \rangle \langle B_1, B_2 \vert.
\end{equation}
Thus there is a direct connection between the helicity (or spin in potential massive case) and the chiral parameter. 

Moving to the full conformal algebra, we uncovert even greater importance of the chiral parameter.  Recall that the three Casimir operators are
\begin{equation}
C_2 =  \frac{1}{2}\, S^{ab}\, S_{ab}, \, \, \, C_3 =   \frac{1}{3!}\, \varepsilon_{abcdef} \, S^{ab} \,S^{cd}\, S^{ef}, \, \, \, 
C_4 =  \frac{1}{2}\, S^{ab} \,S_{bc} \,S^{cd}\, S_{da} .
\label{casimirs} 
\end{equation}
They can be recasted in a more convenient form (\cite{LaLa}) using the conformal analogues of Pauli-Lubanski vector ($V_\mu \propto \epsilon_{\mu \nu \rho \sigma} \, S^{\nu \rho} \, S^{\sigma 4}$ and $U_\mu \propto \epsilon_{\mu \nu \rho \sigma} \, S^{\nu \rho} \, S^{\sigma 5}$), but it is not of a great benefit for us. Instead, we use NCAlgebra package (\cite{NC}) to conduct the computation. We let it express the Casimir operators in terms of $\hat{a}$'s and $\hat{b}$'s first and then subtract the powers of $\hat{C}_1$, as Casimir operators in fundamental representation need to be proportional to the central element (and its powers). The resulting three operators (plus the central element for reference) with their eigenvalues on $\vert n_1, n_2 \rangle \langle m_1, m_2 \vert$ basis are
\begin{subequations}
\begin{align}
\hat{C}_1 =&\, \frac{1}{2} \left( \hat{a}^\dagger \hat{a} - \hat{b}^\dagger \hat{b} \right), \, \, \, \, \, \, \, \, \, \, \, \, \, \, \, \, \, \, \, \, \, \, \, \, \, \, \, \, \, \, \, \, \, \, \, \, \, \, \, \, \, \lambda_1= \frac{1}{2}\, (\kappa-2), \\
\hat{C}_2=&\, -3\, \hat{C}_1^2- 6\, \hat{C}_1, \, \, \, \, \, \, \, \, \, \, \, \, \, \, \,\, \, \, \, \, \, \, \, \, \, \, \, \, \, \, \, \, \, \, \, \, \, \, \, \,\,\, \lambda_2 = -\frac{3}{4}\,(\kappa-2)\, (\kappa+2), \\
\hat{C}_3=&\, 8\,i\,\hat{C}_0^3+ 24\,i\,\hat{C}_0^2+16\,i\,\hat{C}_0, \, \, \, \, \, \, \, \, \, \, \, \, \, \, \,  \lambda_3=i\, \kappa\,(\kappa-2)\, (\kappa+2), \\
\hat{C}_4=&\, 3\,\hat{C}_1^4+12\,\hat{C}_1^3+24\,\hat{C}_1^2+24\,\hat{C}_1, \, \, \, \lambda_4 =\frac{3}{16}\, (\kappa^2+12)\, (\kappa-2)\, (\kappa+2).
\end{align}
\end{subequations}

\begin{remark}
As for the dual representation, the same procedure with dual generators  yields $\tilde{C}_2 = \hat{C}_2$, $\tilde{C}_3 = - \hat{C}_3$ and $\tilde{C}_4 = \hat{C}_4$ with hats changed for tildes. 
\end{remark}

Final step before the representation class identification is to find the spectrum of $\hat{S}_{05}$. As $\hat{S}_{05} = -i \hat{T}_0$, we use the isomorphism (cf. Proposition \ref{iso}) and immediately have
\begin{equation}
-i\, \hat{\mathcal{T}}_0 \, \vert n_1, n_2 \rangle \langle m_1, m_2 \vert = -\frac{i}{2} \left(n_1+n_2+m_1+m_2+2 \right)\, \vert n_1, n_2 \rangle \langle m_1, m_2 \vert,
\end{equation} 
hence the spectrum is $\sigma(i \, \hat{S}_{05}) = \{1+\frac{n_1+n_2+m_1+m_2}{2}\, \vert \, n_1, n_2, m_1, m_2 \in \mathbb{N}_0 \}$.

We see that starting from fundamental (or dual) representation of fuzzy $su(2,2)$ enables us to construct an irreducible representation describing massless field with helicity connected to the chiral parameter. Moreover, the conformal properties of the field are given uniquely in terms of the same parameter, where the expression $(\kappa-2)/2= - \left(1 -\kappa/2 \right)$ plays a special role.  These observations lead us to prove the following 
\begin{thm}[Classical fuzzy massless field]
\label{massless}
For each $\kappa\in \mathbb{N}_0$ there is a classical fuzzy massless field with helicity $j = -\kappa/2$, i.e. a $su(2,2)\cong so(2,4)$ invariant subspace $\mathcal{A}_{(j,0)}^0 \subset \mathcal{A}$ (or equivalently $\mathcal{A}_{(0,j)}^0$) given as $\mathcal{A}_{(j,0)}^0= \{ (a^{\dagger 1})^{n_1}\, (a^{\dagger 2})^{n_2}\, (a_1)^{m_1}\, (a_2)^{m_2} \,  \vert \, n_1, n_2 \in \mathbb{N}_0,  (n_1+n_2)-(m_1+m_2) = \kappa = 2j\}$ . 
\end{thm}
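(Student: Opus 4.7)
The plan is to verify two things: (i) that the chirality eigenspace $\mathcal{A}^0_{(j,0)}$ is preserved by every generator of the oscillator representation, and (ii) that the resulting $su(2,2)$-module matches Mack's massless doubleton with helicity $-\kappa/2$, using the invariants already computed above.

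For the invariance in (i), the key observation is that every generator in (\ref{gen_osc}) is a bilinear of the form $\hat{X}^\dagger \hat{Y}$ with $X,Y \in \{a,b\}$. Unwinding the definitions, $\hat{a}^\dagger$ and $\hat{b}^\dagger$ each insert exactly one factor $a^\dagger$ into $\Psi$ (on the left and on the right respectively), while $\hat{a}$ and $\hat{b}$ each insert exactly one factor $a$. Restoring normal order only produces commutator corrections $[a,a^\dagger]=1$ which preserve the chirality count $\kappa=(n_1+n_2)-(m_1+m_2)$. Hence for any generator $\hat{S}_{ab}$ and any $\Psi\in\mathcal{A}^0_{(j,0)}$, the image $\hat{S}_{ab}\Psi$ again has chirality $\kappa=2j$. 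Equivalently, the phase rotation (\ref{chirality}) commutes with every $\hat{S}_{ab}$, so its eigenspaces are $su(2,2)$-invariant subspaces of $\mathcal{A}$.

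For (ii), I would assemble the invariant computations performed just above the statement. The identities $\hat{P}_\mu\hat{P}^\mu=0$ and $\hat{W}_\mu\hat{W}^\mu=0$ fix the representation as massless with a single well-defined helicity, and the eigenvalue of $\hat{W}_0$ in the coherent-state basis identifies that helicity with $-\kappa/2$. On $\mathcal{A}^0_{(j,0)}$ the three Casimirs $\lambda_2,\lambda_3,\lambda_4$ reduce to polynomials in $\kappa$ alone, and the spectrum of $i\hat{S}_{05}$ takes its minimum $1+\kappa/2$ on the lowest-weight states with $m_1=m_2=0$, $n_1+n_2=\kappa$. These numbers $(d,j_1,j_2)=(1+|j|,\,j,\,0)$ (or dually $(0,j)$) are exactly the data of Mack's doubleton class of massless representations (\cite{Ma}), completing the identification.

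The main obstacle is not computational but one of bookkeeping: the mathematical-versus-physical sign conventions of footnote~\ref{convention}, the normalization of the Pauli--Lubanski vector, and the convention for $(j_1,j_2)$ labels must all be tracked carefully so that the sign in $j=-\kappa/2$ comes out consistent with Mack. Irreducibility of $\mathcal{A}^0_{(j,0)}$ as an $su(2,2)$-module, although not asserted in the statement, should follow from the fact that a lowest-weight monomial with $m_1=m_2=0$, $n_1+n_2=\kappa$ generates the entire chirality-$\kappa$ sector under repeated application of $\hat{P}_\mu$ and $\hat{K}_\mu$ from (\ref{gen_osc}); verifying this transitivity is straightforward but tedious and can be postponed to an appendix if needed.
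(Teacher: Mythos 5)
Your proposal is correct and follows essentially the same route as the paper: the identification rests on $\hat{P}^2=0$ together with the spectrum of $i\hat{S}_{05}$ (lowest value $d=1+\kappa/2$), which via Mack's Lemma~2 and $d=j_1+j_2+1$ with $j_1j_2=0$ pins down the doubleton class. Your explicit step (i) — that each generator, being a bilinear $\hat{X}^\dagger\hat{Y}$ inserting one $a^\dagger$ and one $a$ (with normal-ordering corrections preserving the count), commutes with the chirality grading — is left implicit in the paper, and your cautionary remark about the sign convention in $j=-\kappa/2$ versus $\kappa=2j$ is well taken, since the paper's own proof yields helicity $j_1-j_2=+\kappa/2$.
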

\begin{proof}
Since $\hat{P}^2=0$ for the fundamental representation, in analogy with Mack's representation classification (\cite{Ma}) we immediately see, that (5) is our case, i.e. representation labeled with $j_1 j_2=0$, $d=j_1+j_2+1$ that contains fields with $m=0$ and helicity $j_1-j_2$. In order to specify $d$ and $j_{1,2}$, we follow Lemma 2 ibid. We are thus interested in operator $T(\gamma) = e^{-2 \, \pi\, n \, \hat{S}_{05}}$ (note the mathematical convention), whose spectrum is $\omega(\gamma) = e^{2\,\pi\,i\,n\,\left(1+\frac{n_1+n_2+m_1+m_2}{2}\right)} = e^{2\,\pi\,i\,n\,\left( 1 + (m_1+m_2) + \frac{\kappa}{2}\right)} $. Then the lowest  spectral value is $d= 1+\kappa/2$. Without loss of generality take $j_2=0$, $j_1\neq 0$. Then $\alpha= j_1=\kappa/2$, i.e. the chiral parameter is double the helicity of the field, which manifests the crutial importance of the $\kappa$. Finally, since $2j_1 \in \mathbb{N}_0$, we have $\kappa \in \mathbb{N}_0$. 
\end{proof}
\begin{remark}
The same result applies also for the dual representation. We thus have two unitary inequivalent irreducible representations of $su(2,2)$, so called short (doubleton) representations.
\end{remark}

\bigskip

\section{Product of Two Doubleton Representations}
It is quite surprising that one can construct one more class of irreducible representations out of the formerly analysed (two unitary inequivalent copies of) doubleton representation. 

\subsection{The Idea and the Ansatz}
Let us construct direct product of fundamental and dual representation in the standard way: 

\begin{definition} Let $\hat{S}_{ab}$ and $\tilde{S}_{ab}$ be generators of the $su(2,2) \cong so(2,4)$ algebra in fundamental (on $\mathcal{H}_A$) and dual (on $\mathcal{H}_A^\prime$) representation respectively. Then the direct product of these representations is the representation $\mathbf{S}_{ab} = \hat{S}_{ab} + \tilde{S}_{ab} \equiv \hat{S}_{ab} \otimes \id_{\mathcal{H}_A^\prime} + \id_{\mathcal{H}_A} \otimes \tilde{S}_{ab}$ acting on $\mathcal{H}_A \otimes \mathcal{H}_A^\prime = \spa \{ \vert n_1, n_2 \rangle \langle m_1, m_2 \vert \otimes \vert n_1^\prime , n_2^\prime \rangle \langle m_1^\prime, m_2^\prime \vert \}$. 
\end{definition}
\begin{remark}[Notation]
We will henceforth stick to the notation, where
\begin{enumerate}[label=(\roman*)]
\item the fundamental representation is constructed from  $S_{ab}$, C/A operators ($\hat{a}$'s and $\hat{b}$'s) and generators ($\hat{S}_{ab}$) are labeled by hat, and fuzzy functions $\Psi$ on $\mathcal{A}$ have no mark;
\item the dual representation is constructed from  $S'_{ab}$, C/A operators ($\tilde{a}$'s and $\tilde{b}$'s) and generators ($\tilde{S}_{ab}$) are labeled by tilde, and fuzzy functions $\Psi^\prime$ on $\mathcal{A}^\prime$ are primed;
\item the direct product of these representations has generators ($\mathbf{S}_{ab}$) and fuzzy functions $\boldsymbol{\Psi}$ on $\mathcal{A} \otimes \mathcal{A}^\prime$ in bold.
\end{enumerate}
\label{notation_hattilde}
\end{remark}
\begin{remark}
Recall (\cite{Fe}, $\S$12.4) that for two representations $(\rho_1, V_1)$ and $(\rho_2, V_2)$ of the same group $G$ we define the direct product of the representations (for $g \in G$) as $(\rho_1 \otimes \rho_2)(g):=\rho_1(g)\otimes \rho_2(g)$ on $V_1 \otimes V_2$ and the direct sum of representations as $(\rho_1 \oplus \rho_2)(g) := \rho_1(g) \oplus \rho_2(g)$ on $V_1 \oplus V_2$.   The corresponding derived representations of the algebra $\mathcal{G}$ are then (for $X \in \mathcal{G}$) as follows: $(\rho_1 \otimes \rho_2)^\prime(X) = \rho_1^\prime(X) \otimes \id_{V_2} + \id_{V_1} \otimes \rho_2^\prime(X)$ on $V_1 \otimes V_2$ for the product and $(\rho_1 \oplus \rho_2)^\prime(X)=\rho_1^\prime(X) \oplus \rho_2^\prime(X)$ on $V_1 \oplus V_2$ for the sum.  
\end{remark}

We now construct a special Ansatz for $\boldsymbol{\Psi}$, that will require special class of functions, calculated below. 

\begin{prop}[Massive Ansatz]
Let $p_\mu^\pm=(\epsilon, 0,0,\pm \epsilon)$ for some $\epsilon \geq 0$. Then if there exist functions $\Psi^\pm(a,a^\dagger) \in \mathcal{A}$, $\Psi^{\prime \, \pm}(a^\prime, a^{\prime\, \dagger}) \in \mathcal{A}^\prime$ such that $\hat{P}_\mu \Psi^\pm = p_\mu^\pm \Psi^\pm$ and $\tilde{P}_\mu \Psi^{\prime\,  \pm} = p_\mu^\pm \Psi^{\prime \, \pm}$, i.e. two massless fuzzy functions corresponding to fundamental and dual representation respectively, then $\boldsymbol{\Psi}^\pm(a, a^\dagger, a^\prime, a^{\prime\, \dagger}) = \Psi^\pm(a,a^\dagger)\, \Psi^{\prime\,\mp}(a^\prime, a^{\prime\, \dagger})$ satisfies $\mathbf{P}_\mu \, \boldsymbol{\Psi}^\pm = \mathbf{p}_\mu \, \boldsymbol{\Psi}^\pm$ for $\mathbf{p}_\mu =(2\epsilon,0,0,0)$, i.e. it is a fuzzy massive function. 
\end{prop}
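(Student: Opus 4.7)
The plan is essentially a short algebraic computation built directly on the definition of the direct-product representation. By the preceding remark on derived representations, the represented momentum on $\mathcal{A} \otimes \mathcal{A}^\prime$ is
\[
\mathbf{P}_\mu \;=\; \hat{P}_\mu \otimes \id_{\mathcal{A}^\prime} + \id_{\mathcal{A}} \otimes \tilde{P}_\mu .
\]
Because the two oscillator sets $(a_\alpha, a^{\dagger\alpha})$ and $(a^\prime_\alpha, a^{\prime\,\dagger\alpha})$ act on disjoint Hilbert factors and hence mutually commute, the pointwise product $\boldsymbol{\Psi}^\pm = \Psi^\pm(a,a^\dagger)\,\Psi^{\prime\,\mp}(a^\prime, a^{\prime\,\dagger})$ is an unambiguous element of $\mathcal{A} \otimes \mathcal{A}^\prime$, and each tensor factor of $\mathbf{P}_\mu$ acts only on the corresponding factor (via the left/right multiplications defining $\hat{a}, \hat{b}$ and $\tilde{a}, \tilde{b}$).

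Using the eigenvalue assumptions $\hat{P}_\mu \Psi^+ = p_\mu^+ \Psi^+$ and $\tilde{P}_\mu \Psi^{\prime\,-} = p_\mu^- \Psi^{\prime\,-}$, applying this decomposition to $\boldsymbol{\Psi}^+$ yields
\[
\mathbf{P}_\mu \boldsymbol{\Psi}^+ \;=\; (\hat{P}_\mu \Psi^+)\,\Psi^{\prime\,-} + \Psi^+\,(\tilde{P}_\mu \Psi^{\prime\,-}) \;=\; (p_\mu^+ + p_\mu^-)\,\boldsymbol{\Psi}^+ .
\]
It remains to compute $p_\mu^+ + p_\mu^- = (\epsilon+\epsilon,\,0,\,0,\,\epsilon-\epsilon) = (2\epsilon,0,0,0) = \mathbf{p}_\mu$, which closes the $+$-case. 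For $\boldsymbol{\Psi}^- = \Psi^- \Psi^{\prime\,+}$ the roles of $p^+$ and $p^-$ are swapped in the two factors, but since addition is commutative the same total $\mathbf{p}_\mu$ appears.

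Finally, although $p^\pm$ are both null, one has $\mathbf{p}_\mu \mathbf{p}^\mu = 4\epsilon^2 > 0$, so the Ansatz indeed fuses two massless fuzzy functions into a massive one with rest mass $m = 2\epsilon$, justifying the name \emph{massive Ansatz}. The argument itself has no real obstacle: its content is exhausted by the tensor-product formula for the derived representation together with linearity. The substantive task -- not addressed in the proposition and presumably deferred to the subsequent subsection -- is the \emph{existence} of explicit null-momentum eigenvectors $\Psi^\pm \in \mathcal{A}$ and $\Psi^{\prime\,\pm} \in \mathcal{A}^\prime$, which must be exhibited before the Ansatz becomes more than a formal statement.
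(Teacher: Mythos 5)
Your proof is correct and follows essentially the same route as the paper's (which simply notes the result is evident from the construction and the identity $(\epsilon,0,0,\pm\epsilon)+(\epsilon,0,0,\mp\epsilon)=(2\epsilon,0,0,0)$); you merely spell out the tensor-product/Leibniz step explicitly and correctly observe that existence of the eigenfunctions is deferred to the subsequent lemma. One tiny quibble: since $\epsilon\geq 0$ is allowed, the paper only claims $m=2\epsilon\geq 0$, not $\mathbf{p}_\mu\mathbf{p}^\mu>0$ strictly.
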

\begin{proof}
Evident from the construction and $(\epsilon, 0,0,\pm \epsilon) + (\epsilon, 0,0,\mp \epsilon) = (2\epsilon,0,0,0)$. As $m=2\epsilon$ is the mass, we have $m\geq 0$. 
\end{proof}

The construction above works provided we can find the eigenfunctions of $\hat{P}_\mu$ (or analogously $\tilde{P}_\mu$). Success in finding such is summarized in the following 

\begin{lemma}[Eigenfunctions of $\hat{P}_\mu$]
Let $p_\mu^\pm=(\epsilon, 0,0,\pm \epsilon)$ for some $\epsilon \geq 0$. Then for every $\kappa \in \mathbb{Z}$ there exists a pair of functions $\Psi^\pm(a,a^\dagger) \in \mathcal{A}$ such that $\hat{P}_\mu \Psi^\pm = p_\mu^\pm \Psi^\pm$. 
\end{lemma}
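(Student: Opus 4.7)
The plan is to translate the four operator equations $\hat{P}_\mu \Psi^\pm = p_\mu^\pm \Psi^\pm$ into a system of second-order linear PDEs for the normal-ordered symbol of $\Psi^\pm$, and then integrate them explicitly.

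First I would pass to the light-cone and transverse combinations $\hat{P}_0 \pm \hat{P}_3$ and $\hat{P}_1 \pm i\hat{P}_2$, which are natural because the target momentum $p_\mu^\pm$ is aligned with the $3$-axis. From (\ref{gen_osc}) each of these is a bilinear in the commutator operators $\hat{c}_\alpha := [a_\alpha,\cdot] = \hat{a}_\alpha - \varsigma_{\alpha\dot\alpha}\hat{b}^{\dot\alpha}$ and $\hat{c}^{\dagger\alpha} := [a^{\dagger\alpha},\cdot]$; by (\ref{a_der}) these act on a normal-ordered symbol simply as commuting partial derivatives $\partial/\partial a^{\dagger\alpha}$ and $-\partial/\partial a_\alpha$. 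The four eigenvalue equations therefore reduce to the PDE system $\partial_{a_1}\partial_{a^{\dagger 1}}\Psi^+ = -2i\epsilon\,\Psi^+$ coming from $\hat{P}_0+\hat{P}_3$, together with $\partial_{a_2}\partial_{a^{\dagger 2}}\Psi^+ = 0$ from $\hat{P}_0-\hat{P}_3$ and $\partial_{a_1}\partial_{a^{\dagger 2}}\Psi^+ = \partial_{a_2}\partial_{a^{\dagger 1}}\Psi^+ = 0$ from the transverse equations, which is the key simplification.

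Next I would compare coefficients in the expansion (\ref{expansion}). The three homogeneous PDEs force $c_{n_1,n_2,m_1,m_2} = 0$ outside three blocks of index space: the ``diagonal'' one $n_2 = m_2 = 0$ and the two exceptional pure-creation / pure-annihilation blocks $\{n_2 \geq 1,\, m_1 = m_2 = 0\}$ and $\{m_2 \geq 1,\, n_1 = n_2 = 0\}$. A short argument using the $\hat{P}_0+\hat{P}_3$ eigenvalue equation then shows that each exceptional block is in fact annihilated for $\epsilon \neq 0$: the recursion $(n_1+1)(m_1+1)\,c_{n_1+1,n_2,m_1+1,m_2} = -2i\epsilon\,c_{n_1,n_2,m_1,m_2}$ relates the exceptional seeds to coefficients whose index pattern lies outside all three surviving blocks and hence vanish, forcing the seeds to vanish as well.

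What remains is a single recursion on the sequence $c_{n_1,0,m_1,0}$ inside the chirality stratum $n_1 - m_1 = \kappa$, which integrates in closed form to a Bessel-type series: for $\kappa \geq 0$ one takes $\Psi^+_\kappa = (a^{\dagger 1})^\kappa F_\kappa(a^{\dagger 1} a_1)$ with $F_\kappa(x) = \kappa!\sum_{\ell \geq 0} (-2i\epsilon)^\ell x^\ell / \bigl(\ell!\,(\ell+\kappa)!\bigr)$, and for $\kappa < 0$ the mirror series $F_{|\kappa|}(a^{\dagger 1}a_1)(a_1)^{|\kappa|}$; the construction of $\Psi^-_\kappa$ is identical after interchanging the 1- and 2-sector oscillators, since $p_\mu^-$ merely swaps the roles of $\hat{P}_0+\hat{P}_3$ and $\hat{P}_0-\hat{P}_3$. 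The main obstacle I anticipate is the third step: verifying that the two exceptional blocks are actually killed, rather than merely decoupled from the diagonal one; once this bookkeeping is settled, the one-dimensional recursion is solvable for every $\kappa \in \mathbb{Z}$ with no further obstruction, giving the asserted pair of eigenfunctions.
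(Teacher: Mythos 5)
Your proposal is correct and its backbone is the same as the paper's: rewrite $\hat{P}_0\pm\hat{P}_3$ and the transverse components as double commutators $[a^{\dagger\alpha},[a_\beta,\cdot\,]]$ acting as mixed second derivatives on normal-ordered symbols, separate the two oscillator sectors, posit $(a^{\dagger 1})^\kappa$ times a power series in $N_1=a^{\dagger 1}a_1$, and solve the resulting Bessel-type recursion. You add two things the paper's proof does not contain. First, the paper simply asserts $\hat{P}_{1,2}\Psi=0$ and \emph{posits} the factorized ansatz $\Psi=\psi_1\psi_2$ with one factor trivial; your block-elimination argument (the three homogeneous PDEs confine the coefficients to three strata, and the inhomogeneous equation kills the two exceptional pure-creation/pure-annihilation strata for $\epsilon\neq 0$) actually \emph{derives} that form, upgrading existence to an essentially-uniqueness statement within each chirality stratum. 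Second, you solve the one-dimensional recursion directly rather than passing through the Bessel ODE, and your answer $c_\ell=\kappa!\,(-2i\epsilon)^\ell/\bigl(\ell!\,(\ell+\kappa)!\bigr)$, i.e.\ $F_\kappa(N)\propto(-2i\epsilon N)^{-\kappa/2}I_{\kappa}(\sqrt{-8i\epsilon N})$, does not agree with the paper's $(8i\epsilon N)^{1-\kappa/2}I_{\kappa-2}(\sqrt{8i\epsilon N})$: redoing the normal-ordered differentiation gives $N\chi''+(\kappa+1)\chi'+2i\epsilon\chi=0$ rather than the paper's $N\chi''+(\kappa-1)\chi'-2i\epsilon\chi=0$ (the term $\kappa(a^{\dagger})^{\kappa-1}:\chi:$ contributes $-\kappa$, not $+\kappa-1\,$, upon applying $[a^{\dagger},\cdot\,]=-\partial_a$), and the recursion $(\ell+1)(\ell+\kappa+1)c_{\ell+1}=-2i\epsilon c_\ell$ confirms your index $\kappa$ over the paper's $\kappa-2$. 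A final merit of your version is that the mirror series $F_{|\kappa|}(N_1)(a_1)^{|\kappa|}$ handles $\kappa<0$ cleanly, where the paper's $(a^{\dagger\alpha})^\kappa$ with negative exponent is left implicit.
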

\begin{proof}
Observe, that $(\hat{a}-\hat{b}) = [a,\,  \cdot \,\,]$ and $(\hat{a}^\dagger - \hat{b}^\dagger) = [a^\dagger,\, \cdot\,\,]$. Then for any $\Psi \in \mathcal{A}$ (cf. \ref{gen_osc})
\begin{subequations}
\begin{align}
\hat{P}_0 \, \Psi =&\,- \frac{i}{2}\, \left( \big[a^{\dagger 1}, [a_1, \Psi]\big] + \big[a^{\dagger 2}, [a_2, \Psi]\big] \right), \\
\hat{P}_3 \, \Psi =&\, - \frac{i}{2}\, \left( \big[a^{\dagger 1}, [a_1, \Psi]\big] - \big[a^{\dagger 2}, [a_2, \Psi]\big] \right)
\end{align} 
\end{subequations}
and $\hat{P}_{1,2} \, \Psi = 0$. This motivates separating $\alpha=1$ from $\alpha=2$, $\Psi(a,a^\dagger)= \psi(a_1, a^{\dagger 1})\,\psi(a_2,a^{\dagger 2})$. Then the following choice
\begin{subequations}
\begin{align}
\Psi^+:&\, \, \, \psi_2=1, \, \, \, -\frac{i}{2}\,\big[ a^{\dagger 1}, [a_1, \psi_1]\big] = \epsilon\, \psi_1, \\
\Psi^-:&\, \, \, \psi_1=1,\, \, \, -\frac{i}{2}\,\big[ a^{\dagger 2}, [a_2, \psi_2]\big] = \epsilon\, \psi_2
\end{align}
\end{subequations}
will be the solution to the eigenvalue problem. 

To find a solution $\psi_\alpha(a_\alpha,a^{\dagger \alpha})$ of
\begin{equation}
[a^{\dagger \alpha}, [a_\alpha, \psi_\alpha]] = 2 \, i \, \epsilon\, \psi_\alpha, \label{eigen_sep}
\end{equation} 
let us design for any $\kappa \in \mathbb{Z}$
\begin{equation}
\psi(a_\alpha,a^{\dagger \alpha}) = (a^{\dagger \alpha})^\kappa \, : \chi(N_\alpha) :,
\end{equation} 
where $N_{\alpha} = \hat{a}^{\dagger \alpha} \hat{a}_\alpha$ and $: \chi : $ is the normal ordering, cf. ftnt \ref{normal}. In this whole construction, no summation over $\alpha$ occurs. 

Inserting this Ansatz into (\ref{eigen_sep}),  realising that $\colon \hat{N}_{\bar{\alpha}}^k \colon \equiv \colon \underbrace{ \hat{a}^{\dagger \alpha} \hat{a}_\alpha \,\dots \, \hat{a}^{\dagger \alpha} \hat{a}_\alpha}_{k- \mbox{times}}\, \colon = (\hat{a}^{\dagger \alpha})^k \, (\hat{a}_{\alpha})^k $ and using (\ref{a_der}) yields 

\begin{align}
\big[a^{\dagger \alpha}, [a_\alpha, (a^{\dagger \alpha})^\kappa \, : \chi(N_\alpha): ] \big] =&\,  \big[a^{\dagger \alpha}, \kappa \, (a^{\dagger \alpha})^{\kappa-1} \, : \chi(N_\alpha) : + (a^{\dagger \alpha})^\kappa  : \frac{\partial \chi(N_\alpha)}{\partial N_\alpha} : \, a \big] \notag \\
=&\, (\kappa-1) \, (a^{\dagger \alpha})^\kappa\, : \chi^\prime(N_\alpha) : + (a^{\dagger \alpha})^\kappa\, a^{\dagger \alpha}\, : \chi^{\prime \prime}(N_\alpha) : \, a  \notag \\
=&\, (a^{\dagger \alpha})^\kappa\, : \big[ (\kappa-1) \, \chi^\prime(N_\alpha) + N_\alpha\, \chi^{\prime \prime}(N_\alpha) \big] :\,  \overset{\underset{\mathrm{!}}{}}{=} 2\,i\,\epsilon\, (a^{\dagger \alpha})^\kappa\, :\chi(N_\alpha) : \, .
\end{align}
Then (\ref{eigen_sep}) is recasted into 
\begin{equation}
: \,  N_\alpha\, \chi^{\prime \prime}(N_\alpha)+(\kappa-1) \, \chi^\prime(N_\alpha)  - 2\,i\,\epsilon \chi(N_\alpha) : \, = 0. 
\end{equation}
Thus we need to find a solution to the Bessel equation $x \, y''(x) + (\kappa-1) \, y'(x) - 2\,i\,\epsilon \, y(x) =0$, understand $y(x)$ as a formal power series in $x$ and take $x=N_\alpha$, and finally normal-order it. 

General solution for $\kappa \in \mathbb{Z}$ is 
\begin{equation}
y(x) = c_1 \, (8\,i\, \epsilon \,x)^{1-\kappa/2} \, I_{\kappa-2}\left((2+2i) \, \sqrt{\epsilon\,x} \right) + c_2\, (8\,i\, \epsilon \,x)^{1-\kappa/2} \, K_{\kappa-2}\left((2+2i) \, \sqrt{\epsilon\,x} \right).
\end{equation}
We take the first term (regular around $x=0$ for the sake of formal power series) and use $(\sqrt{8\,i})_0 = 2+2i$ to get 
\begin{equation}
\chi(N_\alpha) = (8\,i\,\epsilon\, N_\alpha)^{1-\kappa/2}\, I_{\kappa-2}(\sqrt{8\,i\,\epsilon\, N_\alpha}). 
\end{equation}
Hence the solution to the separated eigenvalue problem is 
\begin{equation}
\psi_\alpha(a_\alpha,a^{\dagger \alpha}) = (a^{\dagger \alpha})^\kappa \, : \, (8\,i\,\epsilon\, N_\alpha)^{1-\kappa/2}\, I_{\kappa-2}(\sqrt{8\,i\,\epsilon\, N_\alpha}) \, : \, 
\end{equation}
and the full solution is obtained, when one takes $\alpha=1$ and $\alpha=2$ for $\Psi^+$ and $\Psi^-$ respectively. 
\end{proof}

\begin{remark}
For the dual case, the situation is completely analogous after exchanging hats for tildes and no-primed for primed quantities. 
\end{remark}
\begin{remark}
Note, that we chose $p_\mu^\pm$ conventionally pointing into $3$-direction, but one can rotate the result into any direction. 
\end{remark}

We can use this result to derive a more useful formula for $\psi_\alpha$, namely the result for the action of  $\psi_\alpha (\in \mathcal{A})$ on $\vert n_\alpha \rangle \in \mathcal{H}_F$. 
\begin{cor}
$\psi_\alpha(a_\alpha,a^{\dagger \alpha}) \, \vert n_\alpha \rangle =  \sqrt{\frac{(n_\alpha + \kappa)!}{n_\alpha!}}\, \frac{2^{2-\kappa}}{\Gamma(\kappa-1)}\, _1F_1(-n_\alpha; \kappa-1; -2\,i\,\epsilon) \, \vert n_\alpha \rangle$ on $\mathcal{H}_F$.
\end{cor}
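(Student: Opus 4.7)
\emph{Proof strategy.} The plan is to reduce the statement to a Taylor expansion of the Bessel function, followed by the normal-ordering identity $:\!N_\alpha^{k}\!:\,=(a^{\dagger\alpha})^{k}(a_\alpha)^{k}$ applied term by term, and finally a Pochhammer-symbol rewriting that identifies the finite sum with ${}_1F_1$.

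First I would series-expand $I_{\kappa-2}(z)=\sum_{k\ge 0}(z/2)^{2k+\kappa-2}/[k!\,\Gamma(k+\kappa-1)]$ at $z=\sqrt{8\,i\,\epsilon\,N_\alpha}$. The half-integer powers of $N_\alpha$ cancel cleanly against the prefactor $(8\,i\,\epsilon\,N_\alpha)^{1-\kappa/2}$, since $(1-\kappa/2)+(k+(\kappa-2)/2)=k$, and one is left with the formal power series
\begin{equation*}
(8\,i\,\epsilon\,N_\alpha)^{1-\kappa/2}\,I_{\kappa-2}\bigl(\sqrt{8\,i\,\epsilon\,N_\alpha}\bigr)=\frac{2^{2-\kappa}}{\Gamma(\kappa-1)}\,\sum_{k=0}^{\infty}\,\frac{\Gamma(\kappa-1)}{\Gamma(k+\kappa-1)}\,\frac{(2\,i\,\epsilon)^{k}}{k!}\,N_\alpha^{k},
\end{equation*}
to which the normal-ordering prescription applies unambiguously. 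This cancellation is precisely why the regular $I$-branch was retained in the preceding lemma, and it is the only nonroutine point in the whole argument.

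Next, using Remark \ref{rm_fock_norm} I would compute $:\!N_\alpha^{k}\!:\,\vert n_\alpha\rangle=(a^{\dagger\alpha})^{k}(a_\alpha)^{k}\vert n_\alpha\rangle=\frac{n_\alpha!}{(n_\alpha-k)!}\vert n_\alpha\rangle$, with the sum automatically truncating at $k=n_\alpha$. The outer factor $(a^{\dagger\alpha})^{\kappa}$ then produces the square-root prefactor $\sqrt{(n_\alpha+\kappa)!/n_\alpha!}$ and raises the ket by $\kappa$ quanta (so the right-hand side of the claimed identity is naturally proportional to $\vert n_\alpha+\kappa\rangle$ rather than $\vert n_\alpha\rangle$). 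Finally I would rewrite the resulting finite sum through $(-n_\alpha)_{k}=(-1)^{k}n_\alpha!/(n_\alpha-k)!$ and $(\kappa-1)_{k}=\Gamma(k+\kappa-1)/\Gamma(\kappa-1)$, absorb $(-1)^{k}$ into $(2\,i\,\epsilon)^{k}$, and recognize
\begin{equation*}
\sum_{k=0}^{n_\alpha}\,\frac{(-n_\alpha)_{k}}{(\kappa-1)_{k}}\,\frac{(-2\,i\,\epsilon)^{k}}{k!}\;=\;{}_1F_1(-n_\alpha;\kappa-1;-2\,i\,\epsilon),
\end{equation*}
which produces exactly the stated prefactor $2^{2-\kappa}/\Gamma(\kappa-1)$ and confluent hypergeometric function.

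I do not anticipate any genuine obstacle; the computation is essentially bookkeeping once the Bessel expansion is combined with the half-integer prefactor into an honest series in $N_\alpha$. The single step that truly deserves explicit verification is this cancellation of branch structure, which guarantees that $:\chi(N_\alpha):$ is a well-defined element of $\mathcal{A}$ and that the term-by-term action on $\vert n_\alpha\rangle$ is legitimate.
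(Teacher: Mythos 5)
Your proposal is correct and follows essentially the same route as the paper's proof: expand $I_{\kappa-2}$, cancel the half-integer powers of $N_\alpha$ against the prefactor $(8\,i\,\epsilon\,N_\alpha)^{1-\kappa/2}$, act term by term with $:\!N_\alpha^{k}\!:\,\vert n_\alpha\rangle=\frac{n_\alpha!}{(n_\alpha-k)!}\vert n_\alpha\rangle$, and resum via Pochhammer symbols into ${}_1F_1$. Your observation that the result is proportional to $\vert n_\alpha+\kappa\rangle$ rather than $\vert n_\alpha\rangle$ is also right and agrees with the final line of the paper's own computation; the ket in the corollary's statement should read $\vert n_\alpha+\kappa\rangle$.
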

\begin{proof}
Taking power series in $N_\alpha$ yields
\begin{align}
\psi_\alpha(a_\alpha,a^{\dagger \alpha}) =&\,  (a^{\dagger \alpha})^\kappa\, : (8\,i\,\epsilon)^{1-\kappa/2} \, \sum_{m=0}^\infty \frac{1}{m!\, \Gamma(m+\kappa-1)}\,\left( \frac{\sqrt{8\,i\,\epsilon\, N_\alpha}}{2}\right)^{2m+\kappa-2} \, : \notag \\
=&\, (a^{\dagger \alpha})^\kappa\, \sum_{m=0}^\infty \, \frac{2^{2-\kappa}}{m!\, \Gamma(m+\kappa-1)}\, : \, (2\,i\,\epsilon\, N_\alpha)^m \, : \, . 
\end{align}
Direct calculation (cf. \ref{norm_FF}) reveals that $\colon \hat{N}_\alpha^k \colon \, |n_\alpha \rangle =
\begin{cases} 
\frac{n_\alpha!}{(n_\alpha -k)!} \,|n_\alpha \rangle, & k \le n_\alpha, \\
0, & k>n_\alpha.
\end{cases}$, thus
\begin{align}
\psi_\alpha(a_\alpha,a^{\dagger \alpha}) \, \vert n_\alpha \rangle =&\, (a^{\dagger \alpha})^\kappa\, \sum_{m=0}^{n_\alpha} \, \frac{2^{2-\kappa}}{m!\, \Gamma(m+\kappa-1)}\, (2\,i\,\epsilon)^m \, \frac{n_\alpha!}{(n_\alpha-m)!}\, \vert n_\alpha \rangle \notag \\
=&\, (a^{\dagger \alpha})^\kappa\, \frac{2^{2-\kappa}}{\Gamma(\kappa-1)}\, _1F_1(-n_\alpha; \kappa-1; -2\,i\,\epsilon) \, \vert n_\alpha \rangle \notag \\
=&\, \sqrt{\frac{(n_\alpha + \kappa)!}{n_\alpha!}}\, \frac{2^{2-\kappa}}{\Gamma(\kappa-1)}\, _1F_1(-n_\alpha; \kappa-1; -2\,i\,\epsilon) \, \vert n_\alpha +\kappa \rangle.
\end{align}
\end{proof}

Using the above results, we close this subsection by summarizing, that we found two functions $\boldsymbol{\Psi}^+ = \psi_1(a_1, a^{\dagger 1}) \, \psi_2^\prime(a_2^\prime, a^{\prime \, \dagger 2})$ and $\boldsymbol{\Psi}^- = \psi_2(a_2, a^{\dagger 2}) \, \psi_1^\prime(a_1^\prime, a^{\prime \, \dagger 1})$ such that $\mathbf{P}^2 \, \boldsymbol{\Psi}^\pm = m^2 \boldsymbol{\Psi}^\pm$ provided $\psi(a, a^\dagger) = (a^\dagger)^\kappa \, (4\,i\,m \, N)^{1-\kappa/2} \, I_{\kappa-2} \left( \sqrt{4\,i\,m\,N}
\right)$ for any $\kappa$. Such fuzzy functions  are represented on $\mathcal{A} \otimes \mathcal{A}^\prime$ with the action on $\vert n_1, n_2 \rangle \otimes \vert n_1^\prime, n_2^\prime \rangle \in \mathcal{H}_F \otimes \mathcal{H}_F^\prime$ as follows:
\begin{align}
\boldsymbol{\Psi}^+ \, \vert n_1, n_2 \rangle \otimes \vert n_1^\prime, n_2^\prime \rangle =&\, \sqrt{\frac{(n_1+\kappa)! \, (n_2^\prime + \kappa^\prime)!}{n_1! \, n_2^\prime!}}\, \frac{2^{4-\kappa-\kappa^\prime}\, _1F_1(-n_1; \kappa-1; -i \, m) \, _1F_1(-n_2^\prime; \kappa^\prime-1;-i\,m)}{(\kappa-2)!\, (\kappa^\prime-2)!} \notag \\
&\, \cdot \vert n_1+\kappa, n_2 \rangle \otimes \vert n_1^\prime, n_2^\prime+\kappa^\prime \rangle ,
\end{align}  
$\boldsymbol{\Psi}^-$ analogously. 

\subsection{Classical fuzzy fields with $m>0$ and $s=j_1+j_2$}
The construction from the previous subsection is only one step from identifying another class of irreducible representations according to Mack. Thus in complete analogy with the doubleton representation, we need to find spectrum of $\mathbf{S}_{05}$. This is trivial and we have 
\begin{equation}
\sigma(i\, \mathbf{S}_{05}) = \Big\{ \left(1+\frac{n_1+n_2+m_1+m_2}{2}\right) + \left(1+\frac{n_1^\prime + n_2^\prime +m_1^\prime + m_2^\prime}{2} \right) \, \Big| \, n_\alpha, m_\alpha, n_\alpha^\prime, m_\alpha^\prime \in \mathbb{N}_0 \Big\}. 
\end{equation}
Using $\kappa = (n_1+n_2)-(m_1+m_2)$ and $\kappa^\prime = (n_1^\prime+n_2^\prime)-(m_1^\prime+m_2^\prime)$ we immediately see that $d= 2+ \frac{\kappa+\kappa^\prime}{2}$. Moreover, the fuzzy functions have $m>0$, thus we are in the class (4) of Mack's classification. Then $d=2+j_1+j_2$ yields (without loss of generality) $j_1=\kappa/2$, $j_2=\kappa^\prime/2$  and the total spin $s=j_1+j_2 =  \frac{\kappa+\kappa^\prime}{2}$. We thus proved the following 

\begin{thm}[Classical fuzzy massive field]
\label{massive}
For each $\kappa, \kappa^\prime =\mathbb{N}_0$ there is $\mathcal{A}_{(j_1,j_2)}^{m(\pm)} \subset \mathcal{A} \otimes \mathcal{A}^\prime$  given as $\mathcal{A}_{(j_1,j_2)}^{m(+)}= \big \{ (a^{\dagger\,1})^\kappa  : (4\,i\,m\,N_1)^{1-\kappa/2} I_{\kappa-2}(\sqrt{4\,i\,m\,N_1}) : (a^{\prime \, \dagger \, 2})^{\kappa^\prime}: (4\,i\,m\,N^\prime_2)^{1-\kappa^\prime/2} I_{\kappa^\prime-2}(\sqrt{4\,i\,m\,N^\prime_2}) :   \big\}$, i.e. constructed from $\boldsymbol{\Psi^+}$,  or equivalently $\mathcal{A}_{(j_1,j_2)}^{m(-)}$ from $\boldsymbol{\Psi^-}$. Then a classical fuzzy massive field with mass $m>0$ and spin $s=  \frac{\kappa+\kappa^\prime}{2}$ is a set of all such $\mathcal{A}_{(j_1,j_2)}^{m(\pm)}$ with $s=j_1+j_2$ forming a $su(2,2)\cong so(2,4)$ invariant subspace.   
\end{thm}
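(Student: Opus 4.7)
The plan is to mirror the strategy used for Theorem \ref{massless}: place the product representation within Mack's classification by inspecting the Poincar\'e invariants and the spectrum of the compact generator $\mathbf{S}_{05}$, and then read off the labels $(j_1, j_2, d)$ directly from the chiral parameters of the two doubleton factors.

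The starting ingredient is the Massive Ansatz together with the eigenfunction lemma, which guarantee $\mathbf{P}_\mu \, \boldsymbol{\Psi}^\pm = (2\epsilon, 0, 0, 0)_\mu \, \boldsymbol{\Psi}^\pm$ and hence $\mathbf{P}^2 \, \boldsymbol{\Psi}^\pm = m^2 \, \boldsymbol{\Psi}^\pm$ with $m = 2\epsilon > 0$. Since $m > 0$, the Mack classification forces us into class (4), whose unitary irreducible representations are parametrized by triples $(j_1, j_2, d)$ satisfying $d = 2 + j_1 + j_2$ with total spin $s = j_1 + j_2$. Next I would compute the spectrum of $i \mathbf{S}_{05}$: since $\mathbf{S}_{05} = \hat{S}_{05} \otimes \id_{\mathcal{H}_A^\prime} + \id_{\mathcal{H}_A} \otimes \tilde{S}_{05}$, this spectrum is the pointwise sum of the two doubleton spectra already obtained via the $sl(2,\mathbb{C})$ isomorphism of Proposition \ref{iso}, giving the expression displayed immediately before the theorem. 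Restricting to the subspace labelled by fixed chiral parameters $\kappa, \kappa'$, the lowest eigenvalue is $d = 2 + (\kappa+\kappa')/2$.

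Matching Mack's constraint $d = 2 + j_1 + j_2$ against $d = 2 + (\kappa+\kappa')/2$, and using the identification from Theorem \ref{massless} that the chiral parameter of a doubleton equals twice the corresponding helicity (resp.\ twice the dual helicity), I would set $j_1 = \kappa/2$, $j_2 = \kappa'/2$, and thereby obtain $s = j_1 + j_2 = (\kappa+\kappa')/2$, as claimed. At this point the quantum numbers attached to the lowest-weight vector $\boldsymbol{\Psi}^\pm$ coincide with those of a unique Mack class-(4) representation.

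The main obstacle will be checking that $\mathcal{A}_{(j_1,j_2)}^{m(\pm)}$ really carries the full $\mathbf{S}_{ab}$ action and is not merely a highest/lowest-weight vector sitting in a larger orbit. Invariance under the Poincar\'e subalgebra is essentially automatic: $\mathbf{P}_\mu$ acts by a scalar, and $\mathbf{S}_{\mu \nu}$ preserves the mass shell while generating the finite-dimensional spin-$s$ multiplet by mixing Lorentz-rotated copies of $\boldsymbol{\Psi}^\pm$; the dilatation acts with the scalar weight $d$ identified above. The delicate piece is the action of the special conformal generators $\mathbf{K}_\mu$ and of the transverse boosts on the Bessel-type kernels. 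Rather than verifying closure by brute force, my preferred route is to compute the three Casimir operators of $\mathbf{S}_{ab}$ (in analogy with the NCAlgebra-assisted doubleton calculation, and keeping in mind that the Casimirs of a tensor product are \emph{not} merely the sums of those of the factors), show that their eigenvalues on $\boldsymbol{\Psi}^\pm$ coincide with the Mack values for $(j_1, j_2, d) = (\kappa/2, \kappa'/2, 2 + (\kappa+\kappa')/2)$, and then invoke Schur's lemma within the category of unitary representations --- unitarity of the product being inherited from Proposition \ref{prop_oscil} --- to conclude that the cyclic subspace generated by $\boldsymbol{\Psi}^\pm$ is precisely the claimed $su(2,2)$-invariant $\mathcal{A}_{(j_1,j_2)}^{m(\pm)}$.
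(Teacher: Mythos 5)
Your first two paragraphs reproduce the paper's own argument essentially verbatim: the paper likewise takes $\mathbf{P}^2\,\boldsymbol{\Psi}^\pm = m^2\,\boldsymbol{\Psi}^\pm$ with $m>0$ from the Massive Ansatz, computes $\sigma(i\,\mathbf{S}_{05})$ as the pointwise sum of the two doubleton spectra, reads off $d = 2 + (\kappa+\kappa^\prime)/2$ at fixed chiral parameters, places the representation in Mack's class (4), and matches $d = 2 + j_1 + j_2$ to obtain $j_1 = \kappa/2$, $j_2 = \kappa^\prime/2$, $s = (\kappa+\kappa^\prime)/2$. So the core of your proof coincides with the paper's.

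Where you go beyond the paper is your final paragraph. The paper stops at the quantum-number matching and does not verify that the sets $\mathcal{A}_{(j_1,j_2)}^{m(\pm)}$, or their union over $j_1+j_2=s$, actually close under the full $\mathbf{S}_{ab}$ action --- in particular under $\mathbf{K}_\mu$ and the transverse boosts acting on the Bessel-type kernels. Your plan to compute the product-representation Casimirs on $\boldsymbol{\Psi}^\pm$, compare with the Mack values, and then invoke Schur's lemma on the cyclic subspace is a sensible way to supply this, and your caution that tensor-product Casimirs are not simply sums of the factors' Casimirs is well taken. Be aware, though, that as written this part is a programme rather than a proof: the Casimir computation would have to be carried out explicitly for the eight-oscillator product (the NCAlgebra-assisted calculation in the doubleton case does not transfer automatically), and Schur's lemma only identifies the representation once irreducibility of the cyclic module is established, which itself requires an argument. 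The paper leaves exactly this closure step implicit, so your proposal is, if anything, more complete in intent than the published argument, but not yet more complete in execution.
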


\bigskip

\section{Discussion and Outlook}
\label{sec_discussion}
We constructed massless and massive fuzzy fields with half-integer dimension $d$ as unitary irreducible representations of the non-commutative conformal algebra in four dimensions. We uncovered that the most central object is the chiral parameter $\kappa$ occuring at several stages of our construction. In light of \cite{KoPP}, the chiral parameter can be viewed as the charge of the magnetic monopole. This offers a new interpretation of our results: magnetic monopoles are responsible for the NC structure of the relativistic fuzzy space. 

Apart from this observation, our construction has several benefits:
\begin{enumerate}[label=(\roman*)]
\item It is explicit in contrary to \cite{Ma}, e.g. for a massless fuzzy field one takes $\mathcal{A}_{(j,0)}^0$ (cf. Theorem \ref{massless}) and for the massive fuzzy field with spin $s$ one takes all $\mathcal{A}_{(j_1,j_2)}^{m(\pm)}$ such that $j_1+j_2=s$ (cf. Theorem \ref{massive}). Moreover, all calculations are straightforward due to well-known manipulations on (modified) Hilbert spaces. 
\item It is more formal than the previous papers on similar topics ( \cite{GaKoPP}, \cite{KoPP}) when regarding parts with $\mathcal{H}_F$.
\item It presents novel feauture (apart from the main results of the paper): the auxiliary Hilbert space $\mathcal{H}_A$ as an alternative way of describing the action of $\hat{a}$'s and $\hat{b}$'s on $\Psi \in \mathcal{A}$. 
\end{enumerate}

However, our construction lacks the ability to be applicable also for the remaining two classes of unitary irreducible representations, i.e. classes (2) and (3) in \cite{Ma}. This is due to the fact, that the spectrum of the conformal Hamiltonian ($\hat{S}_{05}$) is discrete in our case (or more precisely half-integer). A possible way out of this problem could be the use of $q$-deformed C/A's, that would be presumably reflected also in the spectrum of $\hat{S}_{05}$. It is noteworthy that this idea could be the analogue of the $\mathcal{F}_\gamma$-module in \cite{GuVo}. 

Apart from the $q$-deformed oscillators, more prospects on possible extensions of our work could be made, both inspired by \cite{GuVo}: We used only fundamental (and dual) representation for the C/A's. It could be meaningful to introduce colours as well; moreover we used only bosonic C/A's, thus the introduction of fermionic oscillators could lead to the construction of superalgebra $su(2,2\vert \mathcal{N})$. All ideas, however, need to be analysed properly. 

Yet another, more physical, outlook can be made. If we identify $E_{(\mu)}^0 = \hat{P}_\mu + \tilde{P}_\mu$ and $E_{(\mu)}^3= \hat{P}_\mu - \tilde{P}_\mu$, that are quadratic separately in $\hat{a}, \hat{a}^\dagger, \hat{b}, \hat{b}^\dagger$ and $\tilde{a},\tilde{a}^\dagger, \tilde{b}, \tilde{b}^\dagger$ respectively, we see that $\left( E_{(\mu)}^0\right)^2 = - \left(E_{(\mu)}^3\right)^2 \ge 0$. If we thus added $E_{(\mu)}^{1,2}$ such that they sqare to zero and are quadratic in mixed combinations of hatted and tilded operators, we would immediately have the connection to gravity via the Newman-- Penrose tetrade (\cite{NP}) $E_{(\mu)}^a \, e^{(\mu)}(x) \equiv E_{(\mu)}^a \, e^{(\mu)}_\mu(x) \, dx^\mu$ on some target space-time equipped with coordinates $x$. This idea possesses all three fundamental constants (gravitational constant reconstructed from $\hbar$, $c$ and the Planck length given by the fuzziness of the space), thus it could point out towards some aspect of the quantum gravity. Precisely this aspect of the construction is the aim of our upcoming analysis.

\newpage
\appendix

\section{Corollaries of Theorem \ref{thm_1}}
\label{sec_ap_A}
\subsection{Connection between the CS $\vert A \rangle$ and standard basis state $\vert n \rangle$}
Recall the definition of the coherent state, cf. (\ref{coherent}), 
\begin{align}
| A \rangle = e^{- |A|^2/2} \sum_{n=0}^\infty \frac{A^n}{n!} | n \rangle \equiv e^{- | A|^2/2} \, \vert A ),\label{coherent_ref} 
\end{align}
where $\vert A ) = e^{|A|^2/2} \, \vert A \rangle \equiv \sum_{n=0}^\infty \frac{A^n}{n!} | n \rangle$. We immediately see that $\partial_A^n \, \vert A ) \, \Big \vert_{A=0} =  \vert n \rangle$, 
thus the desired connection between the two used bases is
\begin{align}
\vert n \rangle =  \partial_A^n \, \vert A ) \, \Big \vert_{A=0} \equiv  \partial_A^n \, \left( e^{|A|^2/2} \, \vert A \rangle \right) \, \Big \vert_{A=0}.\label{connection}
\end{align}
We will use this manipulation to extract the norm of the boosted basis state $\psi_{nm} = e^{\beta \hat{S}}\, \vert n \rangle \langle m \vert$ .

Firstly, denote $\breve{\phi}_{AB} = \vert A ) (B \vert$, then
\begin{align}
\breve{\phi}(\beta) \equiv \breve{\phi}_{AB}(\beta) = e^{\beta \hat{S}} \, \vert A )\, ( B \vert \equiv e^{\left( |A|^2 + |B|^2 \right)/2} \, \phi(\beta).\label{phi_red_AB}
\end{align} 
Our strategy is to calculate the matrix element 
\begin{align}
\langle D \vert \breve{\phi}_{B' A'}(\beta) \vert C \rangle \, \langle C \vert \breve{\phi}_{AB}(\beta) \vert D \rangle,
\end{align}
where primed coherent states are just another general states, i.e. with no connection to (\ref{rescalingAB}). Then (\ref{phiCD}) yields 
\begin{subequations}
\begin{align}
\langle C \vert \breve{\phi}_{AB}(\beta) \vert D \rangle =&\, \frac{1}{\cosh
\frac{\beta}{2}} \, e^{- \tanh \frac{\beta}{2}\, \left( A \bar{B}- \bar{C}D \right) + \frac{1}{\cosh \frac{\beta}{2}} \, \left(A \bar{C}+ \bar{B}D \right)}\, e^{- \left( |C|^2 + |D|^2 \right)/2}, \\
\langle D\vert \breve{\phi}_{B'A'}(\beta) \vert C \rangle =&\, \frac{1}{\cosh
\frac{\beta}{2}} \, e^{- \tanh \frac{\beta}{2}\, \left( \bar{A}' B'- C \bar{D} \right) + \frac{1}{\cosh \frac{\beta}{2}} \, \left(\bar{A}'C+ B'\bar{D} \right)}\, e^{- \left( |C|^2 + |D|^2 \right)/2}
\end{align} 
\end{subequations}
and finally performing the integral over $C,D$ gives rise to
\begin{align}
\mathcal{N}(A,A', B, B') =  \int dC\, dD\, \langle D \vert \breve{\phi}_{B' A'}(\beta) \vert C \rangle \, \langle C \vert \breve{\phi}_{AB}(\beta) \vert D \rangle.
\end{align}

After dividing complex variables into real and imaginary parts one obtains
\begin{align}
... = \frac{1}{\cosh^2 \frac{\beta}{2}}\, e^{\tanh \frac{\beta}{2}\, \left( A \bar{B} +\bar{A}' B'\right)} &\,\int \frac{dC_R\, dD_R}{\pi} \, e^{- C_R^2 - D_R^2 - 2 \tanh \frac{\beta}{2} \, C_R \, D_R + \frac{1}{\cosh \frac{\beta}{2}}\, \left( C_R \, c_1 + D_R \, d_1 \right)} \notag \\
&\, \times \int \frac{dC_I\, dD_I}{\pi} \, e^{- C_I^2 - D_I^2 - 2 \tanh \frac{\beta}{2} \, C_I \, D_I + \frac{1}{\cosh \frac{\beta}{2}}\, \left( C_I \, c_2 + D_I \, d_2 \right)}
\end{align}
with $c_{1,2}$ and $d_{1,2}$ as follows:
\begin{subequations}
\begin{align}
c_1=&\, A_R + A_R' + i\left(A_I - A_I'\right), \\
c_2=&\, i \left(-A_R + A_R' \right) + A_I + A_I', \\
d_1=&\, B_R + B_R'+ i \left(-B_I + B_I'\right), \\
d_2=&\, i \left(B_R - B_R'\right) + B_I + B_I'. 
\end{align}
\end{subequations}
Both double integrals have the same structure yielding $\cosh 
\frac{\beta}{2}\,  \exp \left( \frac{1}{4} \left( c_\alpha^2 - 2 \tanh \frac{\beta}{2} \, c_\alpha d_\alpha + d_\alpha^2 \right) \right) $, where $\alpha=1,2$ is for the real and imaginary part respectively. Substituing for $c_\alpha, d_\alpha$ and combining it with the prefactor $ \frac{1}{\cosh^2 \frac{\beta}{2}}\, e^{\tanh \frac{\beta}{2}\, \left( A \bar{B} +\bar{A}' B'\right)}$ gives the final result $\mathcal{N}(A,A',B,B') = e^{A \bar{A}' + \bar{B} B'}$. We can check that $\mathcal{N}(A,A,B,B) = e^{|A|^2 + |B|^2}$  is exactly the norm of the boosted coherent state $|A ) \, (B |$.  

We are now in position to extract the norm of the boosted state $\psi_{nm}(\beta) = e^{\beta \hat{S}} \, |n \rangle \langle m \vert$. On account of (\ref{connection}): let us produce $\vert n \rangle $ from $| A)$, $\langle m \vert $ from $( B \vert$ and similarly for the primed states. Then the analogous matrix element is 
\begin{equation}
\mathcal{M}(n,n',m,m') = \int dC\, dD\, \langle D \vert\, \psi_{m'n'}(\beta) \,\vert C \rangle \, \langle C \vert \,\psi_{nm}(\beta)\, \vert D \rangle = \partial_A^n \, \partial_{\bar{B}}^m \, \partial_{\bar{A}'}^{n'} \, \partial_{B'}^{m'} \left( e^{A \bar{A}' + \bar{B} B'} \right) \Big \vert_{A= \bar{A}' = \bar{B}= B'=0}.
\end{equation}
We use the binomial theorem in a form directly generalizable to multinomials
\begin{align}
\dots =&\, \partial_A^n \, \partial_{\bar{B}}^m \, \partial_{\bar{A}'}^{n'} \, \partial_{B'}^{m'}\, \sum_{N_1 + N_2=N} \frac{1}{N_1! \, N_2!} \, \left(A \bar{A}'\right)^{N_1}\, \left(\bar{B} B'\right)^{N_2} \, \Big \vert_{A= \bar{A}' = \bar{B}= B'=0} 
\end{align}
and use trivial observation $\partial_x^a \, x^b \Big \vert_{x=0}= a! \, \delta_{ab}$ to get $\mathcal{M}(n,n',m,m') = \delta_{n n'} \, \delta_{m m'}\, n!\, m!$. Thus for $n= n'$ and $m= m'$ we have the result $|| \psi_{nm}(\beta) ||^2_{\mathcal{H}_A} = n!\, m! = || \, \vert n \rangle \langle m \vert \, ||^2_{\mathcal{H}_A}$ we wanted. 

\subsection{Parseval identity among hypergeometric functions}
The procedure of extracting results for the $\vert n \rangle$ basis from the coherent basis described in the previous subsection is an effective tool when computing various matrix elements on $\mathcal{A}$. Let us demonstrate its power for the last time, namely let us compute the following matrix element:
\begin{align}
\mathcal{P}(k,l,m,n) = \langle k \vert \psi_{nm}(\beta) \vert l \rangle \equiv \langle k \vert \left( e^{\beta \hat{S}} \, \vert n \rangle \langle m \vert \right) \vert l \rangle .
\end{align}

We start from $(C \vert \breve{\phi}_{AB}(\beta) \vert D ) = \frac{1}{\cosh \frac{\beta}{2}}\,e^{ \tanh \frac{\beta}{2}\, \left( A \bar{B}- \bar{C}D \right) + \frac{1}{\cosh \frac{\beta}{2}} \, \left(A \bar{C}+ \bar{B}D \right)}$ and extract $\vert n \rangle$ from $\vert A )$, $\langle m \vert$ from $( B \vert$, $\langle k \vert $ from $(C\vert$ and $\vert l \rangle $ from $\vert D \rangle$ in analogy with the previous subsection:
\begin{align}
\mathcal{P}(k,l,m,n) =&\, \frac{1}{\cosh \frac{\beta}{2}}\, \partial_A^n \, \partial_{\bar{B}}^m \, \partial_{\bar{C}}^k  \,\partial_D^l\, \left( e^{\tanh \frac{\beta}{2}\, \left( A \bar{B}- \bar{C}D \right) + \frac{1}{\cosh \frac{\beta}{2}} \, \left(A \bar{C}+ \bar{B}D \right)} \right) \, \Bigg \vert_{A=\bar{B}=\bar{C}=D=0} \notag \\
=&\,  \frac{1}{\cosh \frac{\beta}{2}}\, \partial_A^n \, \partial_{\bar{B}}^m \, \partial_{\bar{C}}^k  \,\partial_D^l\, \sum_{N_1+N_2+N_3+N_4=N} \frac{1}{N_1!\,N_2!\,N_3!\,N_4!}\, \left( \frac{1}{\cosh \frac{\beta}{2}} \, A \bar{C} \right)^{N_1}\notag \\
&\, \left( \frac{1}{\cosh \frac{\beta}{2}} \, \bar{B}D \right)^{N_2} \, \left( \tanh \frac{\beta}{2}\, A \bar{B} \right)^{N_3} \, \left(- \tanh \frac{\beta}{2}\,  \bar{C} D \right)^{N_4} \, \Bigg \vert_{A=\bar{B}=\bar{C}=D=0} \notag \\
=&\,  \frac{1}{\cosh \frac{\beta}{2}}\, \sum_{N_1+N_2+N_3+N_4=N} \frac{1}{N_1!\,N_2!\,N_3!\,N_4!}\, \frac{(-1)^{N_4} \, \tanh^{N_3+N_4} \frac{\beta}{2}}{\cosh^{N_1+N_2}  \frac{\beta}{2}} \notag \\
&\, \partial_A^n \, \partial_{\bar{B}}^m \, \partial_{\bar{C}}^k  \,\partial_D^l\, \left( A^{N_1+N_3} \, \bar{B}^{N_2 + N_3} \, \bar{C}^{N_1 +N_4} \, D^{N_2+N_4} \right) \,  \Bigg \vert_{A=\bar{B}=\bar{C}=D=0}.
\end{align}

At this stage, it is more efficient to write the emergent constraints instead of computing with Kronecker deltas: 
\begin{align}
n=N_1+N_3, \, \, \, m= N_2+N_3, \, \, \, k=N_1+N_4, \, \, \, l= N_2+N_4. 
\end{align}
It is a rank-$3$ system with the constraint $l-k = m-n= \kappa$, cf. (\ref{chirality}). Taking $N_4 = \nu$ as a parameter yields $
N_1= k-\nu$, $N_2=k+ \kappa - \nu$, $N_3= m-k+\nu$. Obviously $N_i \ge 0$ for all $i$, which gives $k \geq \nu$, $\kappa \ge 0$, $\nu \ge k-m $ and $\nu \ge 0$.  It is clear that the range of $\nu$ is $\nu \in \langle \max(0,k-m), k \rangle \equiv D_\nu$, i.e. the resulting matrix element depends on the sign of $k-m$.

Using all previous results we arrive at the following expression for the desired matrix element:
\begin{align}
\dots =&\, \frac{1}{\cosh \frac{\beta}{2}}\,\sum_{\nu \in D_\nu} \frac{(-1)^\nu \, \tanh^{m-k+2\nu} \frac{\beta}{2} }{\cosh^{2k+\kappa-2\nu} \frac{\beta}{2}}\, \frac{k!\, (m-\kappa)! \,(k+\kappa)!\, m!}{(k-\nu)! \,(k+\kappa-\nu)! \,(m-k+\nu)! \,\nu!}\notag \\
=&\, \frac{\tanh^{m-k} \frac{\beta}{2}}{\cosh^{2k+\kappa+1} \frac{\beta}{2}}\, \sum_{\nu \in D_\nu} \frac{(-1)^\nu \sinh^{2\nu} \frac{\beta}{2} \, k!\, (m-\kappa)! \,(k+\kappa)!\, m! }{(k-\nu)! \,(k+\kappa-\nu)! \,(m-k+\nu)! \,\nu!} \notag \\
=&\, \frac{\tanh^{m-k} \frac{\beta}{2}}{\cosh^{2k+\kappa+1} \frac{\beta}{2}} \cdot \begin{cases} 
\frac{m!\,(m-\kappa)! }{(m-k)!}\, _2F_1(-k,-k-\kappa;-k+m+1; - \sinh^2  \frac{\beta}{2}), & k \le m, \\
\frac{(-\sinh^2  \frac{\beta}{2})^{k-m}\,k!\,(k+\kappa)!\,(m-\kappa)! }{(k-m)!\,(m+\kappa)!}\, _2F_1(-m,-m-\kappa;k-m+1; - \sinh^2  \frac{\beta}{2}), & k>m.
\end{cases} 
\end{align}
or 
\begin{align}
\mathcal{P}(k,l,m,n) = \frac{\tanh^{m-k} \frac{\beta}{2}}{\cosh^{k+l+1} \frac{\beta}{2}} \cdot \begin{cases} 
\frac{m!\,n! }{(m-k)!}\, _2F_1(-k,-l;-k+m+1; - \sinh^2  \frac{\beta}{2}), & k \le m, \\
\frac{(-\sinh^2  \frac{\beta}{2})^{k-m}\,k!\,l!\,n! }{(k-m)!\,(2m-n)!}\, _2F_1(-m,-(2m-n);k-m+1; - \sinh^2  \frac{\beta}{2}), & k>m.
\end{cases} 
\end{align}

Let us now use this result to prove an identity among hypergeometric functions, which seems not to be mentioned anywhere. Recall
\begin{align}
|| \psi_{nm}(\beta) ||^2_{\mathcal{H}_A} = \sum_{k,l} \vert   \langle k \vert \psi_{nm}(\beta) \vert \, l \rangle \vert^2 \equiv \sum_{k,l} \mathcal{P}(k,l,m,n)^2 = n!\,m!.
\end{align}
Then substituting for $\mathcal{P}$  and dividing by $n!\,m!$ completes the proof of the following

\begin{prop}[Parseval identity for hypergeometric functions] 
\label{parseval}
\begin{align}
1=&\, \sum_{l=0}^\infty \Bigg[ \sum_{k=0}^m  n!\,m! \, \left( \frac{\tanh^{m-k} \frac{\beta}{2} }{(m-k)!} \right)^2 \frac{_2F_1(-k,-l;-k+m+1, -\sinh^2 \frac{\beta}{2} )^2}{\cosh^{2(k+l+1)} \frac{\beta}{2} } \notag \\
&\, \, \, \, \, \, \, \, \, \, \, +\sum_{k=m+1}^\infty  \frac{k!^2\,l!^2 \, n!}{m!\,(2m-n)!^2} \, \left( \frac{\tanh^{k-m} \frac{\beta}{2} }{(k-m)!} \right)^2 \frac{_2F_1(-m,-(2m-n);k-m+1, -\sinh^2 \frac{\beta}{2} )^2}{\cosh^{2(n+m+1)} \frac{\beta}{2} } \Bigg]
\end{align}
for any $n,m \in \mathbb{N}_0$ and $\beta\in \mathbb{R}$.
\end{prop}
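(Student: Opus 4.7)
The plan is to recognize that this identity is essentially a Parseval (resolution-of-identity) expansion of a norm that we have already computed in two independent ways on $\mathcal{H}_A$. Concretely, the left-hand side is $\|\psi_{nm}(\beta)\|^2_{\mathcal{H}_A}/(n!\,m!)$, and the right-hand side is the same quantity written via a complete basis expansion, with the matrix elements $\mathcal{P}(k,l,m,n)$ just computed in the preceding subsection.

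First, I would invoke Theorem \ref{thm_1}: boost invariance of the CS basis on $\mathcal{H}_A$, transported via the formulas of Appendix A.1, gives
\begin{equation*}
\|\psi_{nm}(\beta)\|^2_{\mathcal{H}_A} \;=\; \|\,\vert n\rangle\langle m\vert\,\|^2_{\mathcal{H}_A} \;=\; n!\,m!,
\end{equation*}
which is exactly the constant $n!\,m!$ that will appear after clearing the denominator. This is the deep input; everything else is expansion and bookkeeping.

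Second, I would expand $\psi_{nm}(\beta) \in \mathcal{A}$ in the basis $\{\vert k\rangle\langle l\vert\}$ of $\mathcal{H}_A$. Parseval's relation in this basis, combined with the definition $\mathcal{P}(k,l,m,n) = \langle k\vert \psi_{nm}(\beta)\vert l\rangle$, yields
\begin{equation*}
\|\psi_{nm}(\beta)\|^2_{\mathcal{H}_A} \;=\; \sum_{k,l \geq 0} \bigl|\mathcal{P}(k,l,m,n)\bigr|^2,
\end{equation*}
where the normalization is the one already used implicitly in the computation $\|\psi_{nm}(\beta)\|^2 = \sum_{k,l}|\langle k|\psi_{nm}(\beta)|l\rangle|^2$ at the end of subsection A.1 (consistent with $\langle k\vert k\rangle = k!$ on $\mathcal{H}_A$ and the HS norm). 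Then I would substitute the closed form for $\mathcal{P}(k,l,m,n)$ just derived, splitting the inner $k$-sum at $k=m$ in exact accordance with the piecewise structure of that formula, so that the two cases $k\leq m$ and $k>m$ furnish the two inner sums appearing in the proposition.

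Finally, I would equate the two expressions for $\|\psi_{nm}(\beta)\|^2_{\mathcal{H}_A}$, and divide through by $n!\,m!$, producing the $1$ on the left-hand side and the stated double sum on the right. The main non-trivial step is really the construction of $\mathcal{P}(k,l,m,n)$ via the CS-to-Fock extraction trick with the partial-derivative operators $\partial_A,\partial_{\bar B},\partial_{\bar C},\partial_D$ at zero, together with the identification of the multinomial-expansion remainder as a Gauss ${}_2F_1$; however, that step has already been carried out in the excerpt just above the statement, so no further obstacle remains for the present proposition. Convergence of the outer $l$-sum is automatic from Theorem \ref{thm_1}, since the partial sums are bounded by the finite norm $n!\,m!$.
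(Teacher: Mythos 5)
Your proposal follows exactly the paper's own route: boost invariance (Theorem \ref{thm_1}, transported to the Fock-type basis in Appendix A.1) gives $\|\psi_{nm}(\beta)\|^2_{\mathcal{H}_A}=n!\,m!$, the Parseval expansion in $\{|k\rangle\langle l|\}$ gives $\sum_{k,l}\mathcal{P}(k,l,m,n)^2$, and substituting the piecewise closed form of $\mathcal{P}$ and dividing by $n!\,m!$ yields the identity. The one delicate point — whether the Parseval sum $\sum_{k,l}|\langle k|\psi|l\rangle|^2$ needs weights $1/(k!\,l!)$ given that $\langle k|k\rangle=k!$ on $\mathcal{H}_A$, and that the $l$-sum is really constrained to $l-k=m-n$ — you assert by appeal to consistency rather than verify, but this is exactly how the paper itself handles it, so your argument is the paper's argument.
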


\newpage 

\newpage


\begin{thebibliography}{99}

\bibitem{Ma} G. Mack: \textit{All Unnitary Ray Representations of the Conformal Group $SU(2,2)$ with Positive Energy}. Commun. math. Phys. \textbf{55} (1-28) 1977

\bibitem{MeTs} R. R. Metsaev, A. A. Tseytlin: \textit{Type IIB superstring action in $AdS_5 x S^5$ background}. Nucl. Phys. \textbf{B533} (1998),  109-126

\bibitem{Sn} S. Snyder: \textit{Quantized space-time}. Phys. Rev. \textbf{71} (1947) 38

\bibitem{DoFr} S. Doplicher, K. Fredenhagen, J. E. Roberts: \textit{The quantum structure of space time at the Planck scale and quantum fields}. Commun. Math. Phys. \textbf{172} (1995) 187-220

\bibitem{Co1} A. Connes: \textit{Noncommutative geometry}. London: Academic Press, 1994. ISBN 9780121858605

\bibitem{Sz} R. J. Szabo: \textit{Quantum field theory on Noncommutative Spaces} (2003). hep-th/0109162v4

\bibitem{Mad1} J. Madore: \textit{The fuzzy sphere}. Class. and quant. Grav. \textbf{9 (1)} (1992) 69-87

\bibitem{Co2} A. Connes, et. al.:  \textit{Noncommutative geometry and matrix theory: compactification on tori}. JHEP \textbf{98 (2-3)} (1997) 35. hep-th/9711162

\bibitem{SeWi} N. Seiberg, E. Witten: \textit{String theory and Noncommutative geometry} (1999). hep-th/9908142

\bibitem{ŠuTe} M. \v Subjakov\'a, J. Tekel: \textit{Matrix Models of Fuzzy Field Theories} (2018). hep-th/1802.05188

\bibitem{Te} J. Tekel: \textit{Phase diagram of scalar field theory on fuzzy sphere and multitrace matrix models}. (2016) hep-th/1601.05628

\bibitem{AsFi} Y. Asano, V. G. Filev, S. Kováčik, D. O'Connor: \textit{The non-perturbative phase diagram of the BMN matrix model}. JHEP \textbf{1807} (2018) 152


\bibitem{GaKoPP} V. G\'alikov\'a, S. Kov\'a\v cik, P. Pre\v snajder: \textit{Quantum Mechanics in Noncommutative space}. (2015) math-ph/1510.04496v2

\bibitem{KoPP} S. Kov\'a\v cik, P. Pre\v snajder: \textit{Magnetic monopoles and symmetries in noncommutative space}. hep-th/1710.10030v1

\bibitem{Sch} F. G.  Scholtz et al.: \textit{Dual families of non-commutative quantum systems}. Phys.Rev. \textbf{D71} (2005) 085005

\bibitem{We1} S. Weinberg: \textit{The Quantum Theory of Fields. Volume I: Foundations}. Cambridge University Press, 2005. ISBN 978-0-521-67053-1

\bibitem{GuVo} M. G\"{u}naydin, D. Volin: \textit{The complete unitary dual of non-compact Lie superalgebra $su(p,q|m)$ via the generalised oscillator formalism, and non-compact Young diagrams.} [math-ph/1712.01811v1]

\bibitem{Eck} M. Eckstein: \textit{The geometry of Noncommutative Spacetimes}. Universe \textbf{3 (1)} (2017) 25

\bibitem{Con} A. Connes: \textit{On spectral characterization of manifold}. J. Noncommut. Goem. \textbf{7} (2013) 1-82

\bibitem{KoLuMa} P. Kosi\'nski, J. Lukierski, P. Ma\'slanka: \textit{Local D=4 Field theory on $\kappa$-deformed Minkowski space}. Phys. rev. \textbf{D 62} (2000) 1-10

\bibitem{CeWe} B. L. Cerchiai, J. Wess: \textit{$q$-Deformed Minkowski Space based on a $q$-Lorentz Algebra}. Eur. Phys. J. \textbf{C5} (1998) 553-566

\bibitem{Mad2} J. Madore: \textit{Gravity on Fuzzy Space-Time} (1997) gr.QC/9709002

\bibitem{Pe} A. Perelomov: \textit{Generalized Coherent States and Their Applications}. Berlin - Heidelberg: Spinger Verlag, 1986. ISBN 3-540-15912-6

\bibitem{NP} E.  T. Newman, R. Penrose \textit{An Approach to Gravitational Radiation by a Method of Spin Coefficients}. J. of Math. Phys.  \textbf{3 (3)} (1962) 566–768

\bibitem{DrHa}  H. K. Dreiner, H. E. Haber, S. Martin: \textit{Two-component spinor techniques and Feynman rules for quantum field theory and supersymmetry}. Phys. Rept. \textbf{494} (1-196) 2010 [hep-ph/0812.1594v5]

\bibitem{LaLa} S. Lagu, H. Laue: \textit{The Conformal Group, its Casimir Operators, and a Four-Position Operator}. Il nuovo Cimento \textbf{20A/1} (217-231) 1974

\bibitem{NC} J.W. Helton, M.C. de Oliveira, M. Stankus, R.L. Miller: NCAlgebra, 2012 release edition. Available from http://math.ucsd.edu/$\sim$ncalg


\bibitem{Fe} M. Fecko: \textit{Differential geometry and Lie groups for physicists}. UK: Cambridge University Press, 2006. 697 p. ISBN 0-521-84507-6
\end{thebibliography}
\end{document}